\newtheoremstyle{custom}
  {\topsep}    
  {\topsep}    
  {\normalfont} 
  {1em}           
  {\itshape}   
  {:}          
  { }          
  {\thmname{\textit{#1}}
   \thmnumber{ \textit{#2}}
   \thmnote{ \textit{(#3)}}}
\theoremstyle{custom}
\newtheorem{Thm}{Theorem}
\newtheorem{Lem}{Lemma}
\newtheorem{Prob}{Problem}
\newtheorem{subproblem}{Problem}[Prob]
\newcommand{\RNum}[1]{\uppercase\expandafter{\romannumeral #1\relax}}
\begin{document}
\title{A Unified QoS-Aware Multiplexing Framework for Next Generation Immersive Communication with Legacy Wireless Applications}
\author{Jihong Li,~\IEEEmembership{Student Member,~IEEE,} Shunqing Zhang,~\IEEEmembership{Senior Member,~IEEE,}~Tao Yu, 
~Guangjin Pan,~Kaixuan Huang,~\IEEEmembership{Student Member,~IEEE,}~Xiaojing Chen,~\IEEEmembership{Member,~IEEE,}~Yanzan Sun,~\IEEEmembership{Member,~IEEE,}~Junyu Liu,~\IEEEmembership{Member,~IEEE,}~Jiandong Li,~\IEEEmembership{Fellow,~IEEE,}~and Derrick Wing Kwan Ng,~\IEEEmembership{Fellow,~IEEE}
\thanks{This work was supported by the National Key Research and Development Program of China under Grants 2022YFB2902304, and the Science and Technology Commission Foundation of Shanghai under Grants 24DP1500500, and 24DP1500700. (Corresponding Author: {\em Shunqing Zhang}. )}
\thanks{Jihong Li, Shunqing Zhang, Tao Yu, Kaixuan Huang, Xiaojing Chen, and Yanzan Sun are with School of Communication and Information Engineering, Shanghai University, Shanghai, 200444, China (e-mails: \{tomlijiong, shunqing, yu\_tao, xuan1999, jodiechen, yanzansun\}@shu.edu.cn).

Guangjin Pan is with the Department of Electrical Engineering, Chalmers University of Technology, 41296 Gothenburg, Sweden (e-mail: guangjin.pan@chalmers.se). 

Junyu Liu and Jiandong Li are with State Key Laboratory of Integrated Service Networks, Institute of Information Science, Xidian University, Xi’an, Shaanxi, China (e-mails: jyliu@stu.xidian.edu.cn; jdli@mail.xidian.edu.cn). 

Derrick Wing Kwan Ng is with the School of Electrical Engineering and Telecommunications, University of New South Wales, Sydney, NSW, Australia (e-mail: w.k.ng@unsw.edu.au).}
}

\markboth{Journal of \LaTeX\ Class Files,~Vol.~14, No.~8, August~2021}%
{Shell \MakeLowercase{\textit{et al.}}: A Unified QoS-Aware Multiplexing Framework for Next Generation Immersive Communication with Legacy Wireless Applications}

\maketitle

\begin{abstract}
Immersive communication, including emerging augmented reality, virtual reality, and holographic telepresence, has been identified as a key service for enabling next-generation wireless applications. To align with legacy wireless applications, such as enhanced mobile broadband or ultra-reliable low-latency communication, network slicing has been widely adopted. However, attempting to statistically isolate the above types of wireless applications through different network slices may lead to throughput degradation and increased queue backlog. To address these challenges, we establish a unified QoS-aware framework that supports immersive communication and legacy wireless applications simultaneously. Based on the Lyapunov drift theorem, we transform the original long-term throughput maximization problem into an equivalent short-term throughput maximization weighted by virtual queue length. Moreover, to cope with the challenges introduced by the interaction between large-timescale network slicing and short-timescale resource allocation, we propose an adaptive adversarial slicing (Ad2S) scheme for networks with invarying channel statistics. To track the network channel variations, we also propose a measurement extrapolation-Kalman filter (ME-KF)-based method and refine our scheme into Ad2S-non-stationary refinement (Ad2S-NR). Through extended numerical examples, we demonstrate that our proposed schemes achieve 3.86 Mbps throughput improvement and 63.96\% latency reduction with 24.36\% convergence time reduction. Within our framework, the trade-off between total throughput and user service experience can be achieved by tuning systematic parameters.
\end{abstract}

\begin{IEEEkeywords}
Immersive communication, QoS-aware, dual-timescale, network slicing, resource allocation. 
\end{IEEEkeywords}

\section{Introduction}
\IEEEPARstart{I}{mmersive} communication\cite{gachet_recommendation_2023}, also known as mobile broadband and low-latency (MBBLL) communications \cite{alwis_survey_2021}, is anticipated to deliver a real-time, interactive video experience in the coming decade, encompassing augmented reality (AR), virtual reality (VR), and holographic telepresence (HT). As an evolution of legacy enhanced mobile broadband (eMBB) and ultra-reliable low-latency communications (URLLC), immersive communication requires not only at the level of extrmemely high throughput with ten gigabits-per-second (Gbps), but also latency as low as few milliseconds, to accommodate the human visual system \cite{elbamby_toward_2018}. Recently, both the international telecommunication union (ITU) and the third generation partnership project (3GPP) have initiated the standardization process for immersive communication, which regarding it as an key enabler for real time interactive video experiences \cite{gachet_recommendation_2023} and launching a study item in the forming Release 19 \cite{3gpp_3gpp_2024}.

A one word approach to deploying immersive applications alongside legacy wireless applications is to exploit the radio access network (RAN) slicing scheme, which, allocates resources to different applications by simply multiplexing them into different RAN slices. In the past 5G era, several innovative resource multiplexing schemes have been proposed, including orthogonal multiplexing with statistical traffic requirements and wireless environments \cite{zhang_joint_2018}, non-orthogonal multiplexing through puncturing \cite{anand_joint_2020}, and superposition based scheme \cite{almekhlafi_superposition-based_2022}, or even combinations of these methods \cite{setayesh_resource_2022}. 
Despite these advancements, including quality of service (QoS) provisioning \cite{zhao_resource_2022} and predictive approaches \cite{chiariotti_temporal_2024}, existing techniques have proven inadequate for dealing with immersive applications as will be explained later. Consequently, developing a unified QoS-aware slice multiplexing framework is of utmost importance. 


\subsection{Prior Works}

During recent years, RAN slicing technologies have been extensively studied, which covers a wide range of orthogonal, non-orthogonal, and hybrid multiplexing schemes \cite{setayesh_resource_2022}. Specifically in the orthogonal multiplexing, different network slices are dedicated to serve different types of applications based on traffic and channel state statistics. For example, 
a sample approximation approach (SAA) based method was developed to operate long timescale wireless resource slicing with known channel statistics in \cite{tang_service_2019}, while the extensions to accommodate bursty traffic profiles and massive Internet of things (IoT) use cases have been explored in \cite{yang_multicast_2021, yang_how_2021, yang_ran_2021}, respectively. 
{\color{black}
On the other hand, in non-orthogonal multiplexing, puncturing is commonly adopted to transmit URLLC packets on top of eMBB services \cite{wang_resource_2023-1}, where the instantaneous throughput for eMBB service is maximized and the delay requirements for URLLC can be guaranteed. 
}
Another interesting non-orthogonal scheme is to exploit superposition-based transmission \cite{almekhlafi_superposition-based_2022}, where eMBB and URLLC services are superimposed together to satisfy their respective requirements simultaneously. Furthermore, by combining the above orthogonal and non-orthogonal multiplexing schemes, a hierarchical deep learning based method was proposed in \cite{setayesh_resource_2022}. Generally speaking, the aforementioned methods focus primarily on physical layer QoS assurance, while the higher layer QoS, such as packet backlog, have been largely overlooked.  

To address this issue, the integration of advanced higher layer queuing system models into existing network slicing (NS) framework has been pursued. 
{\color{black}
For instance, based on the $M/M/1$ queuing model, \cite{karbalaee_motalleb_resource_2023} illustrate different QoS profiles for eMBB, URLLC and massive machine type communications and provision adequate virtual network functions for each slices. 
Furthermore, in \cite{shi_two-level_2022}, the authors model the end-to-end delay for time-critical services in flying car based on queuing theory.  
On the other hand, an alternative approach involved applying the Lyapunov drift theorem \cite{kwak_dynamic_2017} to investigate the long-term stability issue of the current network slicing framework, where virtual queues are constructed to guarantee long-term latency requirements of services \cite{tu_deep_2024,zhao_resource_2022}. 
}
{\color{black}
Moreover, for practical scenarios which the short-term performance is more preferable, Markov decision process (MDP) based methods, such as reinforcement learning (RL) \cite{zhou_performance_2024}, have been widely adopted to address workload management problems in the RAN slicing framework. 
}
In addition, to balance the higher layer and the physical layer performance, a hierarchical structure employing multi-timescale MDP was proposed in \cite{mei_intelligent_2021} to optimize packet dropping rates and spectral efficiency simultaneously. All the above methods, however, focus on isolated QoS requirements with static channel environments, and the channel non-stationary issues with coupled QoS requirements for the upcoming 6G RAN scenarios are yet to be solved, highlighting the need for further research. 

To deal with non-stationary channel variations in next-generation immersive communication systems, predictive modeling has been widely studied for channel estimation, trajectory tracking, and other related areas. For example, autoregressive parameter estimation \cite{vinogradova_estimating_2022} and machine learning \cite{shi_unified_2021}-based channel estimation schemes were proposed to track theses non-stationary channel variations. 
{\color{black}
Besides, long short-term memory and echo state networks are incorporated into reinforcement learning frameworks, aiming at tracking environment variations \cite{choi_deep_2024, yang_feeling_2022}. 
Furthermore, Markov chain-assisted and context-aware online learning was leveraged in \cite{zanzi_laco_2021, zhang_learning_2024} to enhance the network slicing performance for same purposes. 
}
The coupled QoS requirement issue was discussed in \cite{chaccour_can_2022}, where a novel and rigorous characterization of the end-to-end delay distribution for high throughput transmission was identified as crucial. Moreover, in \cite{zhao_online_2024}, these coupled QoS requirements were incorporated into nonlinear knapsack problems and efficiently handled by MDP based online scheduling policies. 
{\color{black}
In fact, there is still limited works to deal with the non-stationary and the coupled QoS requirement issues simultaneously, and the straightforward approach to combine the aforementioned solutions may fail to work due to the non-stochastic multi-timescale interactions and the parameter catastrophic forgetting issues, as explained in \cite{zhang_learning_2024} and \cite{bing_meta-reinforcement_2023}. To address the above challenges, in this paper, we provide non-stationary and non-stochastic-robust slicing schemes under a unified legacy/6G QoS oriented framework.  
}

\subsection{Contributions}
{\color{black}
The main contribution points of our work are as follows:  

\begin{itemize}
    \item {{\bf Hybrid legacy / 6G QoS oriented framework: }
    To achieve a smooth transition to 6G, a joint framework that characterizes heterogeneous services is challenging and non-trivial. Specifically, to the best of our knowledge, for the legacy services, such as eMBB, URLLC as well as the 6G MBBLL services, a unified QoS optimization framework that distinguishes between both strict and loose delay requirements, different service intensity as well as hard and soft QoS fulfillment is not yet investigated. In this paper, we model distinct traffic intensities and MAC-layer backlog thresholds for mixed QoS requirements, incorporated by Lyapunov-based backlog-aware resource allocation to reduce instantaneous delay outage. 
    }
    \item {{\bf Multi-timescale slicing approach with non-stochastic-robustness: }
    Managing complicated multi-timescale interactions in the next-generation network slicing framework presents significant challenges. 
    To be specific, slicing orchestrators in the existing works fail to learn efficiently and provide optimal decisions due to their inability to capture system variation, and deal with the non-stochastic outcomes of frame level resource allocation policies. 
    In this paper, backlog-aware design and non-stochastic-robustness are introduced in our network slicing approach to develop online adaptive adversarial slicing (Ad2S) algorithm with theoretical performance guarantees. This method has determined effectiveness, achieving 
    a less than 73\% cumulative regret against benchmark algorithms, as validated through extensive experiments. 
    }
    \item {{\bf Refined slicing mechanism with non-stationary channel tracking: } 
    Non-stationary channel variations disrupt the asymptotical optimality of Ad2S algorithm due to parameter forgetting issues. Simply applying AI-based channel tracking schemes results in growing complexity and lack of theoretical interpretability. 
    In this paper, we propose a low-complexity context enhancement scheme that incorporates results from an unbiased measurement extrapolation-Kalman filter (ME-KF) channel tracking scheme, thus effectively eliminating the forgetting issues under temporal non-stationary channel. 
    Our theoretical analysis demonstrates that our joint design achieves sub-linear regret, even in non-stationary scenarios. 
    }
\end{itemize}

For the validation of the up-mentioned contributions, we carry out extensive tests of metrics including cumulative regret, average transmission rate, experienced latency, and task satisfaction rate. With our proposed schemes, 3.86 Mbps average throughput improvement, a 63.96\% reduction in latency and a 24.36\% reduction in convergence time are achieved. 
}

The remainder of this paper is organized as follows. In Section~\ref{system model}, we discuss the mathematical models of the unified QoS-aware multiplexing framework for immersive 6G communication with legacy wireless applications and formulate the problem of maximization of average throughput in Section~\ref{problem formulation}. Based on the hierarchical optimization framework, we propose a relaxation-based resource allocation scheme and context-aware Ad2S network slicing scheme for scenarios with stationary channel statistics in Section~\ref{sta} and refine the latter with ME-KF based channel tracker under non-stationary cases in Section~\ref{non-sta}. In Section~\ref{performance analysis} and Section~\ref{simulation results}, analytical and numerical results are presented to demonstrate the performance improvement of the proposed schemes. Finally, concluding remarks are given in Section \ref{conclusion}.

\section{System Model}\label{system model}

In this section, we introduce the network topology and the adopted wireless propagation models, followed by some traffic and queuing dynamics for different applications.
\begin{figure}[!t]
\centering
\subfloat[]{
		\includegraphics[scale=0.28]{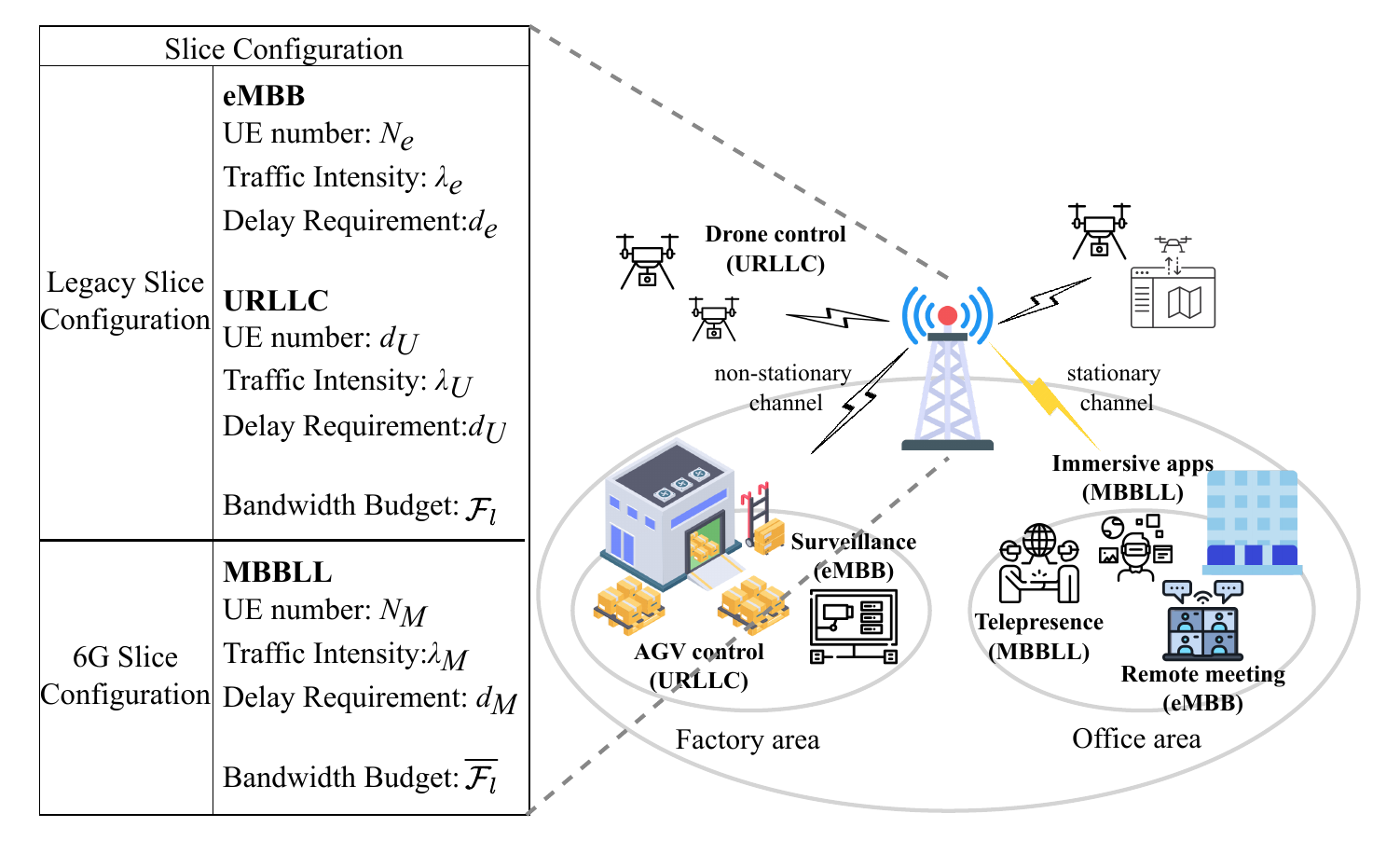}}\\
\subfloat[]{
		\includegraphics[scale=0.2, trim=30 0 10 20, clip]{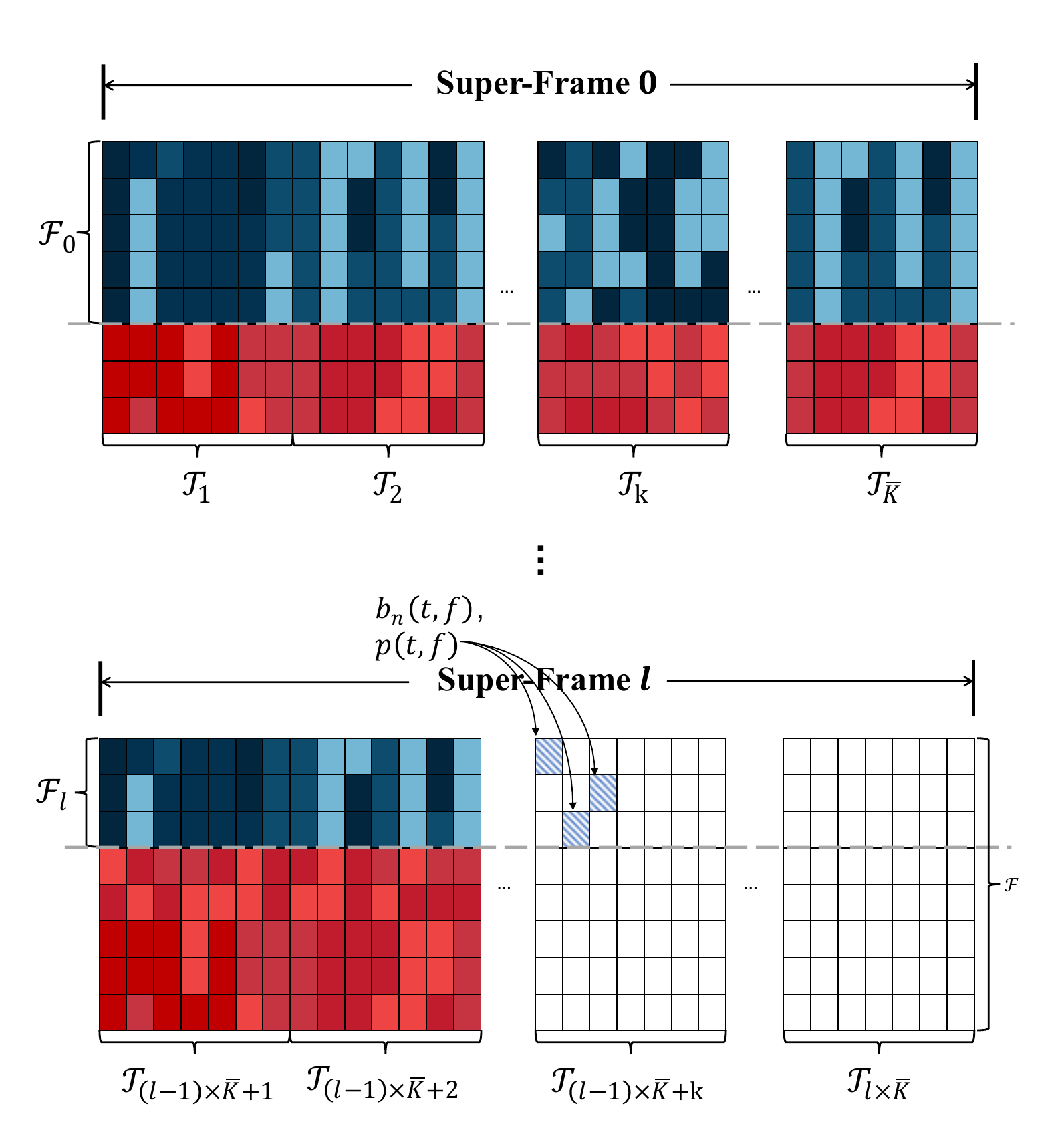}}
    \caption{(a) System model; (b) Dual-scale resource grid for network slicing and resource allocation. }
\vspace{-3mm}
\label{fig:sys_mod}
\end{figure}

\subsection{Topology and Propagation Models}
Consider an Orthogonal Frequency-Division Multiple Access (OFDMA) network with a single base station (BS) covering a circular area with radius $r$, $N_{e}$ eMBB users with indexes $\mathcal{N}_e \in \{1,\ldots, N_e\}$, $N_{U}$ URLLC users with indexes $ \mathcal{N}_U \in \{ N_e+1,\ldots, N_e + N_{U}\}$, and $N_{M}$ MBBLL users with indexes $\mathcal{N}_M \in \{ N_e + N_{U}+1,\ldots, N_e + N_{U} + N_{M}\}$, as shown in Fig.~\ref{fig:sys_mod}. 
As illustrated in Fig.~\ref{fig:sys_mod}(a), the hybrid 5G/6G oriented downlink transmission cell system serves users under a multiplexing scenario with co-existing eMBB, URLLC, MBBLL services. Also, each service is characterized by the number of users $\mathcal{N}_i$, service intensity $\lambda_i$, and latency requirement $d_i$, where $i \in \{e, U, M\}$. The base station transmits data in the downlink according to time-frequency resource grid $(t, f) \in \mathcal{T} \times \mathcal{F}$, which follows a frame and super-frame-based dual-timescale structure as shown in Fig.~\ref{fig:sys_mod}(b) \cite{zhang_joint_2018, tang_service_2019, yang_how_2021}. A sum of $\Bar{K}$ frames are contained in a super-frame. At the start of super-frame $l$, $l \in \{1, \ldots, L\}$, the resource grid is sliced according to some specific network slicing strategy. At the start of frame $k$, $k \in \{1, \ldots, \Bar{K}\}$, the legacy eMBB and URLLC (illustrated with blue color) and 6G MBBLL (illustrated with red color) services are allocated to resource grids $(t, f) \in \mathcal{T}_k \times \mathcal{F}$.

For any given time slot $t$ and sub-channel $f$, the equivalent transmission capacity for the $n^{\mathrm{th}}$ user is given by,
\begin{eqnarray} \label{eqn:capa}
r_n(t,f) =  B \log_2 \left(1 + b_n(t,f) |h_n(t,f)|^2  p(t,f)\right),
\end{eqnarray}
where $B$ denotes the sub-channel bandwidth, $h_n(t,f)$ denotes the normalized complex-valued channel fading coefficient\footnote{In this paper, we normalize the channel coefficient with respect to the noise power for illustration purposes.} between the BS and the $n^{\mathrm{th}}$ user, and $b_n(t,f) \in\{0,1\}$ denotes the indicator function, which equals to $1$ if the time-frequency element $(t,f)$ is allocated to the $n^{\mathrm{th}}$ user and zero otherwise. Since each time-frequency element can only be allocated executively to one user within a time frame $\mathcal{T}_k \times \mathcal{F}$, we have
\begin{eqnarray}
\label{constraint sub-channel}
\sum_{n \in \mathcal{N}_e \cup \mathcal{N}_U \cup \mathcal{N}_M}b_n(t,f) \leq 1, &&\forall(t,f) \in \mathcal{T}_k \times \mathcal{F}.
\end{eqnarray}
Besides, $p(t,f)$ denotes the allocated transmission power for each time-frequency element $(t,f)$, which satisfies
\begin{eqnarray}
\label{constraint power}
\sum_{f \in \mathcal{F}} 
p(t,f) \leq P_{\mathrm{tot}}.
\end{eqnarray}  
In the above formulation, for time frame  $k$ with index $\mathcal{T}_k$ and bandwidth $\mathcal{F}$, the transmission throughput of the $n^{\mathrm{th}}$ user, $r_n(k)$, is thus given by
\begin{eqnarray}
r_n(k) = \sum_{(t,f) \in \mathcal{T}_k \times \mathcal{F}} r_n(t,f).
\label{frame_capa}\end{eqnarray}

\subsection{Traffic Models}
Denote $Q_n(k)$ to be the queue status of the $k^{\mathrm{th}}$ time frame for the $n^{\mathrm{th}}$ user. Then, the queuing dynamic is 
\begin{eqnarray}
\label{dynamic queue model}
    Q_{n}(k+1) = \max\{Q_{n}(k) - \eta r_n(k),0\} + \Lambda_n(k),
\end{eqnarray}
where $\eta = \frac{K_f}{L_{\mathrm{packet}}}$ denotes the normalized parameter corresponding to frame duration $K_f$ and packet length $L_{\mathrm{packet}}$, $\Lambda_n(k)$ is the number of arrived packets during the $k^{\mathrm{th}}$ time frame with 
\begin{eqnarray}
\mathbb{E}[\Lambda_n(k)] = 
\left\{
\begin{array}{l l}
  \lambda_e, & n \in \mathcal{N}_e,\\
  \lambda_U, & n \in \mathcal{N}_U, \\
  \lambda_M, & n \in \mathcal{N}_M,
\end{array}
\right.
\label{arrival}
\end{eqnarray}
where $\mathbb{E}(\cdot)$ denotes the mathematical expectation and $\lambda_e, \lambda_U, \lambda_M$ denote the average arrival rates for eMBB, URLLC, and MBBLL users, respectively. Once we denote $K = L \times \Bar{K}$ to be the total number of time frames, the average queuing backlogs for eMBB, URLLC, MBBLL UEs are given by, 
\begin{eqnarray}
\label{constraint average queue length}
\begin{aligned}
    &\limsup_{K\to\infty}\frac{1}{K}\sum_{k=0}^{K-1}Q_{n}(k)\leq 
    \left\{
    \begin{array}{l l}
      \delta_e, & n \in \mathcal{N}_e,\\
      \delta_U, & n \in \mathcal{N}_U, \\
      \delta_M, & n \in \mathcal{N}_M,
    \end{array} \right.
\end{aligned}
\end{eqnarray}
where $\delta_e = \lambda_e d_e$, $\delta_U = \lambda_U d_U$, and $\delta_M = \lambda_M d_M$ denote the delay requirements for eMBB, URLLC, and MBBLL users, respectively, as governed by the celebrating Little's law in the area of queuing theory \cite{little_littles_2008}. On the other hand, according to the 3GPP TR 38.913 \cite{noauthor_study_2022}, the URLLC packets should be delivered within a frame duration to satisfy the stringent delay requirement such that the above constraint \eqref{constraint average queue length} becomes
\begin{eqnarray}
&\eta r_n(k) \geq Q_n(k), n \in \mathcal{N}_U, \label{constraint URLLC rate} \\
&\underset{K\to\infty}{\limsup}\frac{1}{K}\displaystyle\sum_{k=0}^{K-1}Q_{n}(k)\leq 
\left\{
\begin{array}{l l}
  \delta_e, & n \in \mathcal{N}_e,\\
  \delta_M, & n \in \mathcal{N}_M.
\end{array} \right. \label{constraint average queue length, no URLLC}
\end{eqnarray}

{\color{black}
Furthermore, to isolate the legacy 5G eMBB and URLLC with the 6G MBBLL services, a frequency domain network slicing strategy has been adopted with the time-varying partition $\mathcal{F}_l$ and $\overline{\mathcal{F}_l}$\footnote{{\color{black}This isolation strategy aligns with 3GPP’s network slicing principles \cite{3GPP_38_401, 3GPP_28_530}, while addresses multi-generation coexistence challenges and provides QoS guarantees.}}, such that for the $l^{\mathrm{th}}$ super-frame, we have
}
\begin{eqnarray}
b_n(t,f) = 0, && \forall n \in \mathcal{N}_e \cup \mathcal{N}_U \  \& \ f \in \overline{\mathcal{F}_l}, \label{eqn:f_l_legacy}\\
b_n(t,f) = 0, && \forall n \in \mathcal{N}_M \ \& \ f \in \mathcal{F}_l, \label{eqn:f_l_6g}\\
\mathcal{F}_l \cup \overline{\mathcal{F}_l} = \mathcal{F}. \label{eqn:f_l}
\end{eqnarray}

The following assumptions are adopted through the rest of this paper. 
First, we consider the standard log-normal shadowing, where the magnitude of channel coefficients, i.e., $|h_n(t,f)|$, follows the super-frame based log-normal distribution with mean $\mu_{n}(l)$ and variance $\sigma^2_{n}$, where $\mu_{n}(l)$ stand for the pathloss component. 
Second, the temporal evolution of the pathloss component follows an autoregression model as given by \cite{baddour_autoregressive_2005, chopra_adaptive_2024}, 

\begin{small}
\begin{eqnarray}
    \begin{bmatrix}
        \mu_{1}(l)\\
        \ldots\\
        \mu_{N}(l)
    \end{bmatrix} 
    = \bm{A}(l - 1)
    \begin{bmatrix}
       \mu_{1}(l-1)\\
        \ldots\\
       \mu_{N}(l-1)
    \end{bmatrix} + \bm{N}_{\mu}, \quad l \geq 1. 
\label{pathloss_ar}
\end{eqnarray}
\end{small}

\noindent In the above expression, $\bm{A}(l) = \textrm{diag}[a_1(l), \ldots, a_{N}(l)]$ denotes the temporal correlation of the $l^{th}$ super-frame that evolves according to $\bm{A}(l) = \beta \bm{A}(l - 1) + \bm{N}_{\bm{A}}$, where $\textrm{diag}(\bm{a})$ denotes the diagonal matrix composed with vector $\bm{a}$, and $\beta$ is any positive constant that satisfies $0 \leq \beta < 1$ which stands for the damping term of classical autoregression model as specified in \cite{hamilton_time_1994}. 
Third, complex-valued channel coefficients $h_n(t, f)$ are sensed in each allocation frame. 
Fourth, $\bm{N}_{\mu}$ and $\bm{N}_{\bm{A}}$ follow the standard zero mean Gaussian distributions with covariance matrices $\bm{Q}_{\mu}$ and $\bm{Q}_{\bm{A}}$, respectively. 
Fifth, $\Lambda_n(k)$ follows Poisson distribution \cite{karagiannis_nonstationary_2004} with mean arrival rate $\lambda_n$. 
Last but not least, once we assume the minimum ergodic period of both $\{|h_n(t, f)|\}$ and $\{\lambda_n(k)\}$ as $K_e$, as specified in \cite{tang_service_2019, yang_how_2021}, we have $\Bar{K} > K_e$, and $\Bar{K}$ is sufficiently large such that $\{|h_n(t, f)|\}$ within each super-frame is independent and identically distributed (i.i.d.) and ergodic, which results from the significant difference in the timescales. 

\section{Problem Formulation and Transformation}\label{problem formulation}
In this section, we first formulate the average throughput maximization problem exploiting a general optimization framework and than discuss the Lyapunov drift-based formulation. 

To start with, based on the per-user throughput expression as given in \eqref{frame_capa}, we can calculate the total throughput of BS by accumulating over all the possible users, e.g., $n \in \mathcal{N}_e \cup \mathcal{N}_U \cup \mathcal{N}_M$. By averaging over the total number of time frames, $K$, the original throughput maximization problem can be formulated as follows. 
\begin{Prob}[Original Problem] The average throughput of the unified QoS-aware multiplexing framework can be maximized via the following optimization problem.
\begin{eqnarray}\label{prob_original}
    \underset{\{b_n(t,f)\}, \{p(t,f)\}, \{\mathcal{F}_l\}}{\textrm{maximize}} && \lim_{K \to \infty}\frac{1}{K}\sum_{k = 1}^{K}\sum_{n \in \mathcal{N}_e \cup \mathcal{N}_U \cup \mathcal{N}_M} r_n(k), \nonumber \\
    \textrm{subject to} \qquad
    && \eqref{constraint sub-channel}, \eqref{constraint power}, \eqref{constraint URLLC rate}-\eqref{eqn:f_l}, \nonumber
\end{eqnarray}
where the frequency domain network slicing strategy $\{\mathcal{F}_l\}$ is updated on a super-frame basis and other resources, e.g., $\{b_n(t,f)\}, \{p(t,f)\}$, are determined on a per time frame basis.
\label{prob URLLC}
\end{Prob}

By mixing the discrete network slicing strategy $\{\mathcal{F}_l\}$, the discrete bandwidth allocation $\{b_n(t, f)\}$, and the continuous power allocation $\{p(t, f)\}$, the above problem belongs to a NP-hard mixed integer non-linear programming (MINLP) problem as specified in \cite{boukouvala_global_2016}. In addition, with the long-term average queuing constraint \eqref{constraint average queue length, no URLLC} and coupled queuing dynamics due to shared wireless resources, Problem~\ref{prob URLLC} becomes even more challenging to solve in general. 

To make it mathematically tractable, 
we denote a virtual queue defined as $\Gamma(k)\triangleq[G_1(k), G_2(k), \ldots, G_n(k)]$ with dynamics $G_n(k + 1) = \max\{G_n(k) + Q_n(k + 1) - \delta_{e,M}, 0\}, n \in \mathcal{N}_e \cup \mathcal{N}_M$, and the corresponding weighted-Lyapunov function \cite{neely_mj_stochastic_2010} is given by,\begin{eqnarray}
    L(\Gamma(k))\triangleq\frac{\omega_{Q}}{2}\sum_{n \in \mathcal{N}_e \cup \mathcal{N}_M}{G_n(k)}^2,
\end{eqnarray}
where $\omega_{Q}>0$ is a weight factor for the virtual queue. By utilizing the Lyapunov drift theorem \cite{neely_mj_stochastic_2010}, we can solve Problem~\ref{prob URLLC} through minimizing the drift-minus-utility (DMU) function as summarized in the following lemma.

\begin{Lem}[Lyapunov Drift \cite{neely_mj_stochastic_2010}] \label{lem_dmu_long_term}
If we denote the DMU function to be $\Delta(\Gamma(k)) - \omega_{T} \mathbb{E}\left[\sum_{n \in \mathcal{N}_e \cup \mathcal{N}_U \cup \mathcal{N}_M} r_n(k)|\Gamma(k)\right]$, then the objective function in Problem~\ref{prob URLLC} can be maximized when the following upper bound of DMU function is minimized.

\begin{small}
\vspace{-6mm}
    \begin{align} \label{DMU upper bound}
        \Delta(\Gamma(k)) - \omega_{T}\mathbb{E}\left[ \sum_{n \in \mathcal{N}_e \cup \mathcal{N}_U \cup \mathcal{N}_M} r_n(k)|\Gamma(k) \right] \nonumber \\
        \leq \mathbb{E} \left[ \frac{\omega_{Q}}{2} \sum_{n \in \mathcal{N}_e \cup \mathcal{N}_M}\big(C_n(k) - 2G_n(k) \eta r_n(k)\big)|\Gamma(k)\right]\nonumber\\
        - \omega_{T}\mathbb{E}\left[ \sum_{n \in \mathcal{N}_e \cup \mathcal{N}_U \cup \mathcal{N}_M} r_n(k)|\Gamma(k) \right], \qquad \qquad \qquad \quad
    \end{align}
\end{small}

\noindent where $\Delta(\Gamma(k)) \triangleq \mathbb{E}\left[L(\Gamma(k + 1)) - L(\Gamma(k))|\Gamma(k)\right]$, $\omega_{T}>0$ denotes the weight factor for total throughput, and $C_n(k) = {(Q_n(k) + \Lambda_n(k))}^2 + {\delta_{e,M}}^2 + 2 G_n(k) Q_n(k) + 2 G_n(k) \Lambda_n(k)$ denotes the throughput irrelevant terms.
\end{Lem}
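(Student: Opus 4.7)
My plan is to reproduce the classical Lyapunov drift derivation of Neely~\cite{neely_mj_stochastic_2010}. First I would unfold the drift as $\Delta(\Gamma(k)) = \tfrac{\omega_Q}{2}\sum_{n\in\mathcal{N}_e\cup\mathcal{N}_M} \mathbb{E}[G_n(k+1)^2 - G_n(k)^2 \mid \Gamma(k)]$ and apply the pointwise inequality $(\max\{x,0\})^2 \leq x^2$ to the virtual-queue recursion $G_n(k+1) = \max\{G_n(k) + Q_n(k+1) - \delta_{e,M},\,0\}$. Expanding the resulting squared binomial and subtracting $G_n(k)^2$ gives
\[
G_n(k+1)^2 - G_n(k)^2 \leq (Q_n(k+1) - \delta_{e,M})^2 + 2G_n(k)(Q_n(k+1) - \delta_{e,M}),
\]
which reduces the task to bounding each right-hand-side term in terms of $Q_n(k)$, $\Lambda_n(k)$, and $r_n(k)$ only.

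For the quadratic term, I would combine the sample-path inequality $[Q_n(k) - \eta r_n(k)]^+ \leq Q_n(k)$---which implies $Q_n(k+1) \leq Q_n(k) + \Lambda_n(k)$---with the observation that $-2Q_n(k+1)\delta_{e,M} \leq 0$, yielding $(Q_n(k+1) - \delta_{e,M})^2 \leq (Q_n(k) + \Lambda_n(k))^2 + \delta_{e,M}^2$. For the linear cross term, I would decompose the positive part via the identity $[x]^+ = x + [-x]^+$ to write $Q_n(k+1) = Q_n(k) - \eta r_n(k) + \Lambda_n(k) + [\eta r_n(k) - Q_n(k)]^+$, which exposes the targeted $-2G_n(k)\eta r_n(k)$ drift contribution. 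Gathering every term that does not depend on $r_n(k)$ into $C_n(k) = (Q_n(k) + \Lambda_n(k))^2 + \delta_{e,M}^2 + 2G_n(k)Q_n(k) + 2G_n(k)\Lambda_n(k)$, taking the conditional expectation, and finally subtracting $\omega_T\,\mathbb{E}[\sum_n r_n(k) \mid \Gamma(k)]$ from both sides produces the stated DMU inequality~\eqref{DMU upper bound}.

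The main obstacle I anticipate is rigorously handling the positive-part residual $2G_n(k)[\eta r_n(k) - Q_n(k)]^+$ that arises in the cross-term decomposition, because the sample-path bound $Q_n(k+1) \geq Q_n(k) + \Lambda_n(k) - \eta r_n(k)$ points the wrong way for an upper-bound derivation. One has to argue either that this residual is absorbed into the throughput-independent bundle $C_n(k)$ by a boundedness assumption on $r_n(k)$, or that it cancels against a matching term so that the surviving drift coefficient of $r_n(k)$ remains the desired $-2G_n(k)\eta$. All subsequent claims---most notably that greedy minimization of the upper bound asymptotically maximizes the long-term throughput of Problem~\ref{prob URLLC}---then follow from the standard drift-plus-penalty theorem of~\cite{neely_mj_stochastic_2010}, and need not be re-derived here.
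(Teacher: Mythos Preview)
Your mechanical derivation matches the paper's almost line for line: the same $(\max\{x,0\})^2\le x^2$ step, the same expansion of $(Q_n(k+1)-\delta_{e,M})^2+2G_n(k)(Q_n(k+1)-\delta_{e,M})$, and the same discarding of the nonpositive terms $-2Q_n(k+1)\delta_{e,M}$ and $-2G_n(k)\delta_{e,M}$.

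The one point where you differ is precisely the obstacle you flag. The paper does not bound or absorb the residual $2G_n(k)[\eta r_n(k)-Q_n(k)]^+$; it removes it by \emph{assumption}. Before expanding, the paper states that ``to conserve bandwidth and power resources, $r_n(k)$ is guaranteed to be less than queue length $Q_n(k)$,'' i.e., the scheduler never serves more than the backlog, so $\eta r_n(k)\le Q_n(k)$ and the queue recursion collapses to the exact equality $Q_n(k+1)=Q_n(k)-\eta r_n(k)+\Lambda_n(k)$. With this no-overservice assumption your residual is identically zero, and the same assumption also justifies the quadratic bound $(Q_n(k)-\eta r_n(k)+\Lambda_n(k))^2\le(Q_n(k)+\Lambda_n(k))^2$ directly, since the left-hand argument is then nonnegative and dominated by $Q_n(k)+\Lambda_n(k)$. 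So neither of your speculated fixes (boundedness of $r_n(k)$ or a hidden cancellation) is needed---just add the assumption and the rest of your argument goes through verbatim.
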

\begin{proof}
Please refer to Appendix~\ref{appendix:lem_1} for the proof.
\end{proof}
By applying Lemma~\ref{lem_dmu_long_term}, the original average throughput maximization problem can be equivalently transformed to the following drift-based formulation. 
\begin{Prob}[Drift-based Formulation] The upper bound of the DMU function can be optimized via the following optimization framework:

\begin{small}
       \begin{align}
        \underset{\{\mathcal{F}_l\}}{\textrm{max}} \quad & \lim_{L \to \infty} \frac{1}{L} \sum_{l = 1}^{L} \left\{\frac{1}{\Bar{K}}\sum_{k = (l - 1) \times \Bar{K} + 1}^{l \times \Bar{K}} F^*_k(\mathcal{F}_l)\right\}, \nonumber\\
        \textrm{s.t.} \quad & F^*_k(\mathcal{F}_l) \triangleq \begin{cases}
            \underset{\{b_n(t, f)\}, \atop \{p(t, f)\}}{\textrm{max}}& \displaystyle\sum_{n \in \mathcal{N}_e \cup \mathcal{N}_M} \omega_Q G_n(k) r_n(k) \nonumber\\
            & - \displaystyle\sum_{n \in \mathcal{N}_e \cup \mathcal{N}_M} \frac{\omega_Q}{2} C_n(k)\nonumber\\
            & + \displaystyle\sum_{n \in \mathcal{N}_e \cup \mathcal{N}_U \cup \mathcal{N}_M} \omega_T r_n(k), \nonumber\\
            \quad \textrm{s.t.} &
            \eqref{constraint sub-channel}, \eqref{constraint power}, \eqref{constraint URLLC rate}, \eqref{eqn:f_l_legacy}-\eqref{eqn:f_l}. \nonumber\\
            \end{cases} \\
        \nonumber
    \vspace{-6mm}
    \end{align} 
\end{small}
\label{prob Lyapunov}
\vspace{-6mm}
\end{Prob}
\noindent With the above transformation, we can effectively decouple the queueing dynamics into dual-timescale hierarchical optimization structure, where the inner optimization focuses on solving the short term power and bandwidth allocation within each time frame and the outer one solves the optimal network slicing strategy based on the obtained utility $F^*_k(\mathcal{F}_l)$.

\section{Dual-timescale Resource Allocation Under Stationary Case}\label{sta}

In this section, we focus on solving the drift-based Problem~\ref{prob Lyapunov} under the stationary condition, e.g., $\bm{A}(l) = \bm{I}, \bm{N}_{\bm{A}} = \bm{N}_{\mu} = \bm{0}$, and constraint \eqref{pathloss_ar} is degenerated into the following expression,  
\begin{align}
        [\mu_{1}(l) \ldots \mu_{N}(l)]^{T} = [\mu_{1}(l-1) \ldots \mu_{N}(l-1)]^{T}, l \geq 1. \label{stationary assumption}
\end{align}
With the dual-timescale hierarchical optimization framework as described in Problem~\ref{prob Lyapunov}, we decompose it into a frame-scale resource allocation problem for the fixed network slicing strategy $\mathcal{F}_l$, and then solve the optimal network slicing strategy in the super-frame scale as follows.
\begin{Prob} The optimal resource allocation strategy can be obtained by iteratively solving the following two subproblems.
\begin{subproblem}[Frame-Scale Resource Allocation]
    By dropping those irrelevant terms from $F_k^*(\mathcal{F}_l)$ such as $C_n(k)$, the frame scale resource allocation problem under any given $\mathcal{F}_l$ is equivalent to, 

    \begin{small}
    \vspace{-3mm}
    \begin{eqnarray}
         \underset{\{b_n(t, f)\}, \{p(t, f)\}}{\textrm{maximize}} && \displaystyle\sum_{n\in\mathcal{N}_e \cup \mathcal{N}_M} \omega_Q G_n(k) \eta r_n(k) \nonumber\\
         && + \sum_{n \in \mathcal{N}_e \cup \mathcal{N}_U \cup \mathcal{N}_M}\omega_T r_n(k)\label{inner_opt}\\
         \textrm{subject to} && \eqref{constraint sub-channel}, \eqref{constraint power}, \eqref{constraint URLLC rate}, \eqref{eqn:f_l_legacy}-\eqref{eqn:f_l}.  \nonumber
         \vspace{-6mm}
    \end{eqnarray} \label{prob frame}
    \end{small}
    
    \noindent Note that solving the frame-scale resource allocation problem is still challenging due to its MINLP and coupled nature for $\{b_n(t, f)\}$ and $\{p(t, f)\}$. 
\end{subproblem}

\begin{subproblem}[Super-frame Scale Network Slicing] With the obtained utility $F^*_k(\mathcal{F}_l)$,  the super-frame scale network slicing problem to maximize the overall utility is given by

\begin{small}
\vspace{-3mm}
\begin{eqnarray}
    \underset{\{\mathcal{F}_l\}}{\textrm{maximize}} && \lim_{L \to \infty} \frac{1}{L} \sum_{l = 1}^{L}\left\{\frac{1}{\Bar{K}}\sum_{k = (l - 1) \times \Bar{K} + 1}^{l \times \Bar{K}}  F^*_k(\mathcal{F}_l) \right\}. \label{obj_hierarchical stochastic problem}
\end{eqnarray}
\end{small}

The super-frame scale network slicing problem is also challenging as the objective function is non-convex and temporal correlated. 
To solve this, several methods have been proposed. 
Indeed, when a closed-form $F_k^*(\mathcal{F}_l)$ is available, the problem can be handled via standard dynamic programming methods \cite{r_bellman_dynamic_1966}. 
On the other hand, by introducing stochastic characteristic into $F_k^*(\mathcal{F}_l)$, efficient upper confidence bound (UCB)-based exploration schemes are proved to achieve asymptotical optimality under certain sub-Gaussian stochastic distributions \cite{zanzi_laco_2021}. 
{\color{black}
However, $F_k^*(\mathcal{F}_l)$ in the considered setting is a non-stochastic \cite{auer_nonstochastic_2002} objective function given under both the backlog status and the frame scale resource allocation policies, thus aforementioned methods may fail to exhibit optimal behavior. 
}

\label{hierarchical stochastic problem}
\end{subproblem}
\end{Prob}

\subsection{Frame Scale Resource Allocation}\label{lower-level_RA}
{\color{black}
\begin{algorithm}
    \newcommand{\algorithmicpower}{\textbf{Power Allocation Stage:}}
	\newcommand{\algorithmicsubchannel}{\textbf{Sub-channel Matching Stage:}}
        \caption{Proposed Algorithm for Problem~\ref{prob frame}}
    \label{alg_psum_bcd}
    \begin{algorithmic}[1]
    \STATE PSUM initialize: $\alpha > 1$, $\sigma^{(1)}$, let the number of outer iterations $n^{out} = 0$. 
    \WHILE{$n^{\mathrm{out}}<n^{\mathrm{out}}_{\mathrm{max}}$:}
        \STATE BCD initialize: $b_n(t, f)^{0} = \frac{1}{|\mathcal{N}|}, p(t, f)^{0} = \frac{P_{\mathrm{tot}}}{|\mathcal{T}_k||\mathcal{F}|}$, let the number of inner iterations $n^{\mathrm{in}} = 0$.
        \WHILE{$n^{\mathrm{in}}<n^{\mathrm{in}}_{max}$:}
            
            \STATE Solve $g_{n^{\mathrm{out}}}(\{b_n(t, f)\}^{n^{\mathrm{in}} + 1}, \{p(t, f)\}^{n^{\mathrm{in}} + 1})$ with CVX \cite{boyd_convex_2023} and acquire $\{b_n(t, f)\}^{n^{\mathrm{in}} + 1}$;\\
               
            \STATE Calculate the objective function of \eqref{inner_opt} which follows $g(\{b_n(t, f)\}^{n^{\mathrm{in}} + 1}, \{p(t, f)\}^{n^{\mathrm{in}} + 1}) - g(\{b_n(t, f)\}^{n^{\mathrm{in}}}, \{p(t, f)\}^{n^{\mathrm{in}}}) = \Delta_{n^{\mathrm{in}}}$; 
            \IF{$\Delta_{n^{\mathrm{in}}} \leq \Delta^{\mathrm{in}}$:}
                \STATE Output $\{b_n(t, f)\}^{n^{\mathrm{in}} + 1}, \{p(t, f)\}^{n^{\mathrm{in}} + 1}$;
            \ELSE
                \STATE Set $n^{\mathrm{in}} = n^{\mathrm{in}} + 1$;
            \ENDIF
        \ENDWHILE
        \STATE $\{b_n(t, f)\}^{n^{\mathrm{out}} + 1} = \{b_n(t, f)\}^{n^{\mathrm{in}} + 1}$; 
        \STATE $\{p(t, f)\}^{n^{\mathrm{out}} + 1} = \{p(t, f)\}^{n^{\mathrm{in}} + 1}$; 
        \IF{$\{b_n(t, f)\}^{n^{\mathrm{out}} + 1}$ is binary:}
            \STATE Output $\{b_n(t, f)\}^{(n^{\mathrm{out}} + 1)}, \{p(t, f)\}^{(n^{\mathrm{out}} + 1)}$;
        \ELSE
            \STATE Set $n^{\mathrm{out}} = n^{\mathrm{out}} + 1$, and let $\sigma^{(n^{\mathrm{out}} + 1)} = \alpha \sigma^{n^{\mathrm{out}}}$;\\
        \ENDIF
    \ENDWHILE
    \end{algorithmic}
\end{algorithm}
}

To solve Problem~\ref{prob frame}, we apply the penalty successive upper bound minimization (PSUM) approach as explained in \cite{zhang_network_2017}. Specifically, we relax the binary variables $\{b_n(t, f)\}$ to be continuous and augment the binary constraint via the $p$-order penalty function to the objective function for optimizing the lower bound of the original problem, block coordinate descent (BCD) is then utilized to obtain resource allocation solution with alternating iterations, this joint algorithm is renamed after penalty BCD-based resource allocation (PBRA) algorithm.  Detailed algorithms are summarized in Algorithm~\ref{alg_psum_bcd} and the convergence properties are discussed in Lemma~\ref{lem_bcd_converge}. 

{\color{black}
\begin{Lem}[Convergence of Alg.~\ref{alg_psum_bcd}]
    Within Alg.~\ref{alg_psum_bcd}, as $\sigma^{n^\mathrm{out}} \to +\infty$, $\{0, 1\}$-valued $\{b_n(t, f)\}$ and critical point $\{b_n(t, f), p(t, f)\}$ for Problem~\ref{prob frame} are achieved. 
    \begin{proof}
        Please refer to Appendix~\ref{appendix:lem_3_real} for the proof. 
    \end{proof}
    \label{lem_bcd_converge}
\end{Lem}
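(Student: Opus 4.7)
The plan is to decompose the convergence argument according to the two nested loops of Alg.~\ref{alg_psum_bcd}: the inner BCD loop for a fixed penalty parameter $\sigma^{(n^{\mathrm{out}})}$, and the outer PSUM loop that inflates $\sigma^{(n^{\mathrm{out}})}=\alpha^{n^{\mathrm{out}}-1}\sigma^{(1)}$ with $\alpha>1$. Let $g_{\sigma}(\{b_n(t,f)\},\{p(t,f)\})$ denote the penalized surrogate consisting of the relaxed-continuous objective of \eqref{inner_opt} together with the $p$-order penalty $\sigma\sum_{n,t,f}\phi(b_n(t,f))$, where $\phi$ is nonnegative on $[0,1]$ with $\phi(b)=0$ iff $b\in\{0,1\}$, following the construction in \cite{zhang_network_2017}. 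The feasible region is compact because $\{b_n(t,f)\}\in[0,1]$ after relaxation and $\{p(t,f)\}$ satisfies \eqref{constraint power}.

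For the inner loop, I would first argue monotonicity of BCD: since Step~5 optimizes one block exactly while holding the other fixed, the sequence $\{g_{\sigma^{(n^{\mathrm{out}})}}(\{b^{(n^{\mathrm{in}})}\},\{p^{(n^{\mathrm{in}})}\})\}$ is non-decreasing. Combined with boundedness of $g_\sigma$ over the compact feasible set, the value sequence converges, so the stopping criterion $\Delta_{n^{\mathrm{in}}}\le\Delta^{\mathrm{in}}$ in Step~7 is eventually satisfied. Because each blockwise subproblem is solved to optimality via CVX and the surrogate is continuously differentiable in each block, standard two-block Gauss--Seidel BCD theory guarantees that every accumulation point of $\{(b^{(n^{\mathrm{in}})},p^{(n^{\mathrm{in}})})\}$ is a coordinate-wise stationary point, and hence a critical point, of the penalized problem $\max g_{\sigma^{(n^{\mathrm{out}})}}$.

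For the outer loop, let $(\bar b^{(n^{\mathrm{out}})},\bar p^{(n^{\mathrm{out}})})$ denote a converged inner iterate. Because the un-penalized part of the objective is bounded on the compact feasible region, the inner optimality of $g_{\sigma^{(n^{\mathrm{out}})}}$ at $(\bar b^{(n^{\mathrm{out}})},\bar p^{(n^{\mathrm{out}})})$ implies that $\sigma^{(n^{\mathrm{out}})}\sum_{n,t,f}\phi(\bar b_n^{(n^{\mathrm{out}})}(t,f))$ remains bounded (otherwise the surrogate value would fall below that achievable at any strictly binary feasible point). Since $\sigma^{(n^{\mathrm{out}})}\to\infty$, this forces $\sum_{n,t,f}\phi(\bar b_n^{(n^{\mathrm{out}})}(t,f))\to 0$, and by continuity of $\phi$ every accumulation point $\bar b^{\star}$ is binary, which justifies the termination check in Step~15. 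To complete the argument, I would combine the coordinate-wise stationarity inherited from the inner loop with an exact-penalty-style passage to the limit: on the binary active set the gradient contribution of $\sigma^{(n^{\mathrm{out}})}\phi(\cdot)$ vanishes, so the KKT conditions for the penalized surrogate reduce to those of Problem~\ref{prob frame}.

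The main obstacle I anticipate is the last step, namely the clean transfer of stationarity from the sequence of penalized problems to the original MINLP. Because $\phi$ is non-convex on $[0,1]$, classical convex exact-penalty theorems do not apply directly; I would need boundedness of the dual sequence for the power-budget and matching constraints together with a careful identification of the binary active set to guarantee that the limiting multipliers are feasible KKT multipliers of Problem~\ref{prob frame}. The property of $\phi$ used in \cite{zhang_network_2017}, namely that $\phi$ is smooth with $\phi'(0)=\phi'(1)=0$, is the crucial lever that makes the penalty gradient vanish on the binary active set in the limit, and I would lean on this construction to close the argument.
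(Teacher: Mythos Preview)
Your outline misses the structural role of the ``SUM'' in PSUM. In the paper (and in \cite{zhang_network_2017}), the $p$-order penalty with $p\in(0,1)$ makes $-P_\epsilon$ convex, so the penalized objective $g_{\sigma}$ is \emph{not} concave in the $b$-block; the inner step does not maximize $g_{\sigma}$ directly but instead replaces $-P_\epsilon$ by its first-order Taylor lower bound $-P_\epsilon^{n^{\mathrm{out}}}-\nabla P_\epsilon^{n^{\mathrm{out}}\,T}(\bm b-\bm b^{n^{\mathrm{out}}})$ and runs BCD on the resulting concave surrogate $g_{n^{\mathrm{out}}}$. Your monotonicity argument for BCD is therefore attached to the wrong function, and the claim that ``each blockwise subproblem is solved to optimality via CVX'' is unjustified without this linearization. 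The paper then uses the block successive upper-bound minimization theory of \cite{hong_unified_2016} to transfer monotone improvement of $g_{n^{\mathrm{out}}}$ back to $g_{\sigma_{n^{\mathrm{out}}}}$, obtains the key inequality $f_{n^{\mathrm{out}}+1}-f_{n^{\mathrm{out}}}\ge \sigma^{n^{\mathrm{out}}+1}(P_\epsilon^{n^{\mathrm{out}}+1}-P_\epsilon^{n^{\mathrm{out}}})$, and finishes both feasibility and criticality by short contradiction arguments.

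The second gap is in your stationarity-transfer step. You assert that the penalty from \cite{zhang_network_2017} satisfies $\phi'(0)=\phi'(1)=0$, and you lean on this to make the penalty gradient vanish on the binary active set. The actual penalty is $P_\epsilon=\sum_{t,f}\big(\sum_n (b_n(t,f)+\epsilon)^p-c_{\epsilon,t,f}\big)$, whose partial derivative $p(b+\epsilon)^{p-1}$ is nonzero at both $b=0$ and $b=1$; moreover it is not a separable $\sum\phi(b_n)$ with $\phi$ vanishing exactly at $\{0,1\}$. Hence the mechanism you invoke for passing KKT conditions to the limit does not apply here, and the ``crucial lever'' you identify is not available. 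The paper avoids this issue entirely by arguing criticality through a contradiction on the existence of a strictly improving adjacent iterate once $P_\epsilon^{j}=0$, rather than by matching gradient contributions.
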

}
\subsection{Super-frame Scale Network Slicing}\label{subsec:saa admm}
With the obtained utility for given $\mathcal{F}_l$ through Algorithm~\ref{alg_psum_bcd}, e.g., $F^*_k(\mathcal{F}_l)$, we focus on solving the long-term utility maximization problem across different network slicing strategies $\{\mathcal{F}_l\}$. To address Problem~\ref{hierarchical stochastic problem}, we apply non-stochastic contextual bandit based methods\cite{neu_efficient_2020}, where the virtual and buffer queue statuses, e.g., $\{G_n((l - 1)\times\Bar{K}), Q_n((l - 1)\times\Bar{K})\}$, are chosen to be contextual features. Since the closed-form expression for $F^*_k(\mathcal{F}_l)$ is unavailable, we directly propose a network slicing algorithm as follows, namely Ad2S algorithm, and discuss the asymptotical optimality properties in Section~\ref{performance analysis}. 

\begin{algorithm}
        \caption{Proposed Algorithm for Problem~\ref{hierarchical stochastic problem}}
    \label{stationary_alg}
    \begin{algorithmic}[1]
    \STATE Initialization: Initialize learning rate, exploration parameter, and per-subcarrier loss, e.g., $\eta = L^{-\frac{2}{3}}(|\mathcal{F}|(1 + 4N))^{-\frac{1}{3}}(\log |\mathcal{F}|)^{\frac{2}{3}}$,$\gamma = L^{-\frac{1}{3}}(|\mathcal{F}|(1 + 4N)\log |\mathcal{F}|)^{\frac{1}{3}}$, $G_n(0) = Q_n(0) = 1, \forall n \in [1,N]$, $\pi_0(f|\bm{X}_l) = \frac{1}{|\mathcal{F}|}, \hat{\bm{\theta}}_{0, f} = \bm{0} \in \mathbb{R}^{1 + 4N}, \forall f \in [1, \ldots, |\mathcal{F}|]$;
    \FOR{$l = 1, \ldots, L$}
        \STATE \label{step start} 
        Based on the observed queue statuses, generate $\bm{X}_l$ according to $[1, G_1((l - 1)\times\Bar{K}), Q_1((l - 1)\times\Bar{K}), Q_1^2((l - 1)\times\Bar{K}), G_1((l - 1)\times\Bar{K})Q_1((l - 1)\times\Bar{K}), \ldots, G_N((l - 1)\times\Bar{K})Q_N((l - 1)\times\Bar{K})]$.
        \STATE \label{step policy generation} Draw $\mathcal{F}_l$\footnotemark  ~from the policy defined as
        \begin{small}
        \begin{align}
            \pi_l(f|\bm{X}_l) = (1 - \gamma) \frac{\exp\left( \eta \displaystyle\sum_{l' = 0}^{l - 1}\langle \bm{X}_l, \hat{\bm{\theta}}_{l', f} \rangle \right)}{\displaystyle\sum_{f' \neq f} \exp\left( \eta \displaystyle\sum_{l' = 0}^{l - 1}\langle \bm{X}_l, \hat{\bm{\theta}}_{l', f'} \rangle \right)} + \frac{\gamma}{|\mathcal{F}|}, 
        \end{align}
        \end{small}
        where $\langle \cdot, \cdot \rangle$ stands for the inner product in the Euclidean space as specified in \cite{anton_elementary_1987}, $\hat{\theta}_{l, f}$ is the estimated linear loss vector for choosing $f$ as the slicing decision $\mathcal{F}_l$; 
        \FOR{$k \in [(l - 1)\times\Bar{K} + 1, l \times \Bar{K}]$}
            \STATE Run Algorithm \ref{alg_psum_bcd} and acquire $F^*_k(\mathcal{F}_l)$;
        \ENDFOR
        \STATE \label{step end} Update $\hat{\bm{\theta}}_{l, f}$ according to
        \begin{small}
        \begin{align}
            \hat{\bm{\theta}}_{l, f} = \frac{\mathbb{I}_{\{ \mathcal{F}_l = f \}}}{\pi_l(f|\bm{X}_l)}   \frac{\frac{1}{\Bar{K}}\displaystyle\sum_{k = (l - 1) \times \Bar{K} + 1}^{l \times \Bar{K}}  F^*_k(\mathcal{F}_l)}{{\bm{X}_l}^T {\bm{X}_l}} \bm{X}_l, 
        \end{align}    
        \end{small}
        where $\mathbb{I}_{\{\cdot\}}$ stands for the indicator function. 
    \ENDFOR
    \end{algorithmic}
\end{algorithm}
\footnotetext{Noting that flat fading is widely adopted in sufficient number of orthogonal slicing works such as \cite{kwak_dynamic_2017, tang_service_2019, yang_how_2021}, therefore we focus on deriving the adequate quantity of sub-channels (i.e., $|\mathcal{F}_l|$) allocated to each slices. }
\begin{figure}[!t]
\includegraphics[width=0.7\linewidth, trim=10 20 10 0, clip]{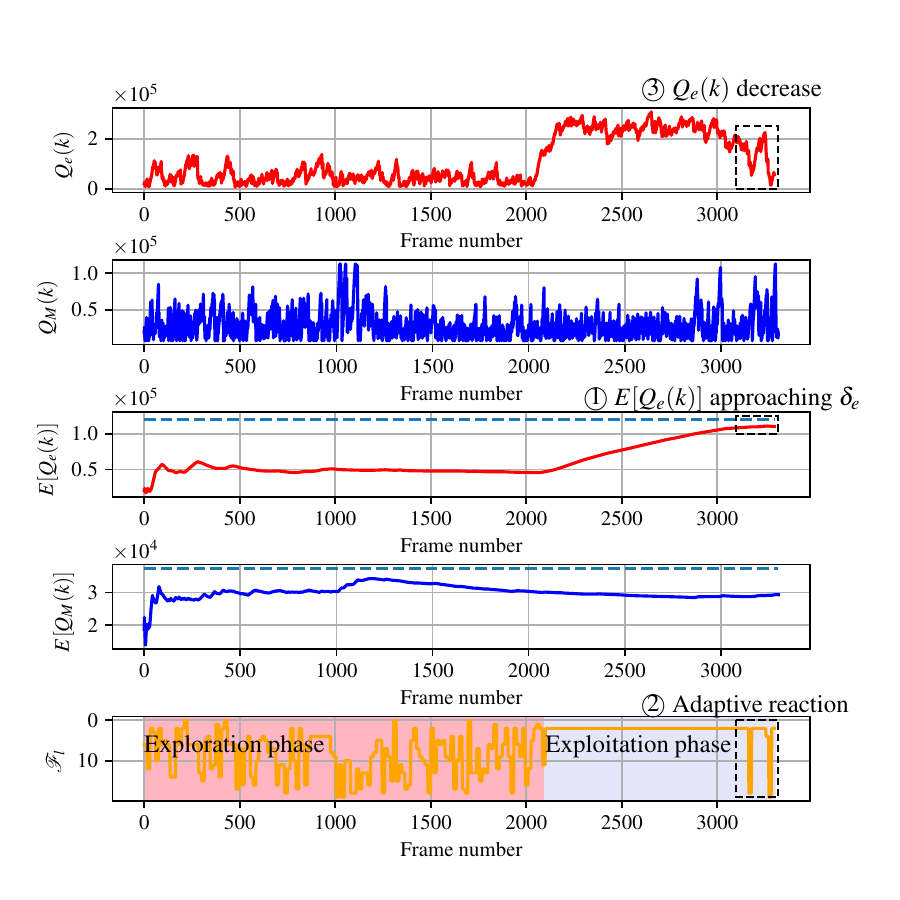}\
\centering
\caption{Slice configuration $\mathcal{F}_l$ and other system dynamics (e.g., $Q_e(k)$, $Q_M(k)$ denoting average backlog for eMBB and MBBLL users, respectively, and their time-average $\mathbb{E}[Q_e(k)]$, $\mathbb{E}[Q_M(k)]$). After entering exploitation stage, as $\textcircled{1}$ the $\mathbb{E}[Q_e(k)]$ approaching $\delta_e$, $\textcircled{2}$ our proposed Ad2S algorithm adaptively slices in favor of honoring legacy QoS, $\textcircled{3}$ $Q_e(k)$ decreases in response. }
\label{fig:queue_vs_slice}
\vspace{-3mm}
\end{figure}
We intuitively present the mechanisms of this algorithm through Fig.~\ref{fig:queue_vs_slice}. As the expected queueing backlog for legacy applications approaches the threshold, our algorithm senses the cumulation of both virtual and queueing backlog in terms of context as specified in step \ref{step start} and adaptively adjusts the slicing configuration in reaction, which then leads to efficient backlog reduction and QoS restoration for legacy users.

\section{Dual-timescale Resource Allocation Under Non-stationary Case}\label{non-sta}

In this section, we reconsider the drift-based Problem~\ref{prob Lyapunov} under non-stationary conditions. Although we can apply Algorithm~\ref{alg_psum_bcd} to obtain $F^*_k(\mathcal{F}_l)$, the non-stationary condition \eqref{pathloss_ar} renders Algorithm~\ref{stationary_alg} ineffective due to its slow convergence and sub-optimal issues as pointed out by \cite{seldin_evaluation_2013, allesiardo_non-stationary_2017}. 
Specifically, the estimation of $\frac{1}{\Bar{K}}\sum_{k = (l - 1) \times \Bar{K} + 1}^{l \times \Bar{K}} F^*_k(\mathcal{F}_l)$ is no longer accurate with the non-stationary channel introduced, leading to these issues. 
To address this and make it mathematically tractable, we refine previously proposed Ad2S with a concise contextual enhancement. In particular, we propose a Kalman filter-based dynamic tracking mechanism to obtain an estimated pathloss component $\hat{\bm\mu}(l)$ and discuss the network slicing refinement in what follows.

\subsection{Kalman Filter-based Dynamic Tracking}\label{section ME-KF}
Different from the previous case considering stationary channel statistics as given by \eqref{stationary assumption}, we propose a measurement extrapolation-Kalman filter (ME-KF) based solution to predict the pathloss component of the $l^{\mathrm{th}}$ super-frame, $\hat{\bm\mu}(l)$. Detailed ME-KF-based algorithms are listed below.

\subsubsection{Initialization Stage}
Initialize $\hat{\bm{A}}(0) = \bm{A}^-(l) = \bm{I}_{N}$, $\hat{\bm{\mu}}(0) = \tilde{\bm{\mu}}(0) = \bm{\mu}^-(0) = \tilde{\bm{\mu}}(0)^{\mathrm{ext}} = \bm{1}_{N}$ and $\bm{P}^-(0) = \bm{P}^+(0) = \bm{I}_{2N}$, where $\mathrm{ext}$ stands for ``extrapolated''.

\subsubsection{Prediction Stage}
Update $\bm{A}^-(l)$ and $\bm{\mu}^-(l)$ according to \eqref{pathloss_ar} and predict the covariance matrix $\bm{P}^-(l)$ as follows: 

\begin{small}
\vspace{-3mm}
\begin{eqnarray}
    \bm{A}^-(l)  & = & \beta \cdot \hat{\bm{A}}(l-1), \\
    \bm{\mu}^-(l) & = & \hat{\bm{A}}(l - 1) \cdot \hat{\bm{\mu}}(l-1), \\
    \bm{P}^-(l) & = & \bm{H}(l - 1) \cdot \bm{P}^+(l - 1) \cdot {\bm{H}(l - 1)}^{T} \nonumber\\
    && +
    \begin{bmatrix}
    \bm{Q}_A & ~\\
    ~ & \bm{Q}_{\mu}
    \end{bmatrix}. \label{prediction_stage_covariance}
\end{eqnarray}    
\end{small}

\noindent where $
    \bm{H}(l - 1) =  \begin{bmatrix}
    \beta \bm{I}_{N} & \bm{0}_{N}\\
    \bm{0}_{N} & \hat{\bm{A}}(l - 1)
    \end{bmatrix}.
$

\subsubsection{Extrapolation Stage}
Noted that $\tilde{\bm\mu}(l)$ is not observable at frame $(l - 1) \times \Bar{K} + 1$, rendering problem challenging with the measurement delay issue. 
Based on the optimal extrapolation in the Larsen's method \cite{larsen_incorporation_1998}, we extrapolate the delayed a posteriori measurement with $\tilde{\bm{\mu}}(l)^{ext} = \tilde{\bm{\mu}}(l - 1) + \bm{\mu}^-(l) - \bm{\mu}^-(l - 1)$.

\subsubsection{Fusion Stage}
In the fusion stage, we denote $\bm{K}(l) = \bm{M} \bm{P}_{[1:2N; N+1:2N]}^-(l - 1) {\bm{P}_{[N+1:2N; N+1:2N]}^-(l - 1)}^{-1}, \bm{M} = (\bm{I} - \begin{bmatrix}
        \bm{0} & \bm{K}(l)
    \end{bmatrix}) \bm{H}(l - 1)$ to be the Kalman gain of the estimator and extrapolation coefficient, respectively. 
This stage fusions the a priori $\bm{A}^-(l)$ and $\bm{\mu}^-(l)$ with extrapolated measurement $\tilde{\bm{\mu}}(l)^{ext}$ and updates as 

\begin{small}
\vspace{-3mm}
    \begin{eqnarray}
        \hat{\bm{A}}(l)&\!=\!&\bm{A}^-(l) + \textrm{diag}\left(\bm{K}_{[1:N; 1:N]}(l) \cdot \left( \tilde{\bm{\mu}}(l)^{ext} - \hat{\bm{\mu}}(l - 1) \right)\right), 
\nonumber\\
        \hat{\bm{\mu}}(l)&\!=\!&\bm{\mu}^-(l) + \bm{K}_{[N+1:2N; 1:N]}(l) \cdot \left( \tilde{\bm{\mu}}(l)^{ext} - \hat{\bm{\mu}}(l - 1) \right), 
\nonumber\\
        \bm{P}^+(l)&\!=\!&\bm{P}^-(l) - \bm{K}(l) \cdot \bm{P}_{[N+1:2N; 1:2N]}^-(l - 1) \cdot \bm{M}^T. 
\nonumber
    \end{eqnarray}
\end{small}

\noindent Output $\hat{\bm{A}}(l)$ and $\hat{\bm{\mu}}(l)$ are therefore updated with delayed input $\tilde{\bm\mu}(l - 1)$. 
The above ME-KF procedures track channel with fused a priori information and extrapolated a posteriori information. 
Therefore, accurate channel tracking is performed in a predictive manner. 
The convergence properties of this ME-KF based dynamic tracking are given as follows. 

\begin{Lem}[Convergence of ME-KF]\label{ME-KF_convergence}
    Under the non-stationary variation evolving as $\bm{A}(l) = \beta \bm{A}(l - 1) + \bm{N}_{\bm{A}}$ as listed in assumptions, with $\textrm{tr}(\bm{Q}_{\bm{A}}), \textrm{tr}(\bm{Q}_{\bm{A}}) < \infty$, when $l \to \infty$, the asymptotical error covariance matrix $\bm{P}^*$ is denoted as 

\begin{small}
\vspace{-3mm}
\begin{align}
    & \bm{P}^* \triangleq \lim_{l \to \infty} \mathbb{E}[\bm{P}(l)] = \bm{F}_1^* \bm{P}^* {\bm{F}_1^*}^T + \begin{bmatrix}
        \bm{Q}_{\bm{A}} & \bm{0}\\
        \bm{0} & \bm{Q}_{\bm{\mu}}
    \end{bmatrix} - \nonumber\\
    & \bm{F}_1^* \bm{P}^* \begin{bmatrix}
            \bm{0}\\
            \bm{I}
        \end{bmatrix} {\left( \begin{bmatrix}
            \bm{0} & \bm{I}
        \end{bmatrix} \bm{P}^-(l - 1) \begin{bmatrix}
            \bm{0}\\
            \bm{I}
        \end{bmatrix}
        \right)}^{-1} \begin{bmatrix}
            \bm{0} & \bm{I}
        \end{bmatrix} \bm{P}^* {\bm{F}_1^*}^T, 
\label{riccati equation}
\end{align}
\end{small}

\noindent where $\bm{F}_1^* \triangleq \displaystyle\lim_{l \to \infty} \mathbb{E}[\bm{F_1}(l)] = \begin{bmatrix}
    \beta\bm{I}_{N \times N} & \bm{0}_{N \times N}\\
    \bm{0}_{N \times N} & \bm{0}_{N \times N}
\end{bmatrix}$. \eqref{riccati equation} is an discrete Riccatti equation as specified in \cite{tkailauth_linear_2000}. Since $\begin{bmatrix}
        \bm{Q}_{\bm{A}} & \bm{0}\\
        \bm{0} & \bm{Q}_{\bm{\mu}}
    \end{bmatrix}$ is positive definite, both $\bm{F}_1(l)$ and $[\bm{0} \ \bm{I}]$ are observable, and that $\displaystyle\lim_{l \to \infty} \rho(\bm{F}_1(l)) < 1$, where $\rho(\bm{F}_1(l)) = \max\{|\lambda|\}$ stands for the spectral radius, $\lambda$ is an eigenvalue of $\bm{F}_1(l )$, $\bm{P}^*$ satisfies the properties of having a unique positive definite solution. Then, proposed ME-KF schemes converge to a steady-state error covariance $\bm{P}^*$ as $l \to \infty$. 
\end{Lem}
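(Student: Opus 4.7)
The plan is to view the ME--KF recursion as a standard Kalman filter on the augmented state $\bm{z}(l) = [\mathrm{diag}^{-1}(\bm{A}(l))^T, \bm{\mu}(l)^T]^T$ with time-varying transition matrix $\bm{H}(l-1)$ and effective observation matrix $[\bm{0}\ \bm{I}]$ (supplied indirectly through the Larsen extrapolation $\tilde{\bm{\mu}}(l)^{\mathrm{ext}}$), then reduce the convergence claim to the classical asymptotic theorem for the discrete algebraic Riccati equation (DARE). First, I would verify that the prediction--fusion steps of the algorithm unfold into the familiar one-step Riccati recursion $\bm{P}^-(l) = \bm{H}(l-1)\bm{P}^+(l-1)\bm{H}(l-1)^T + \mathrm{blkdiag}(\bm{Q}_{\bm{A}}, \bm{Q}_{\bm{\mu}})$ followed by the measurement-update subtraction appearing on the right of~\eqref{riccati equation}, merely by substituting the definitions of $\bm{K}(l)$ and $\bm{M}$.

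Second, I would take expectations and let $l\to\infty$. Since $\bm{A}(l) = \beta\bm{A}(l-1) + \bm{N}_{\bm{A}}$ with $0\le\beta<1$ and $\mathrm{tr}(\bm{Q}_{\bm{A}})<\infty$, iterating gives $\mathbb{E}[\bm{A}(l)]\to\bm{0}$, so $\mathbb{E}[\bm{H}(l-1)]\to\bm{F}_1^*$. Passing to the limit in the Riccati recursion then yields exactly the fixed-point equation~\eqref{riccati equation}. To conclude that this DARE has a unique positive-definite solution $\bm{P}^*$, I would check the two standard hypotheses required by the statement of the lemma: detectability of $(\bm{F}_1^*, [\bm{0}\ \bm{I}])$, which is immediate because the $\bm{\mu}$-block is measured directly and the $\bm{A}$-block has spectral radius $\beta<1$ so any unobserved mode is stable; and stabilizability of $(\bm{F}_1^*, \bm{G})$ with $\bm{G}\bm{G}^T = \mathrm{blkdiag}(\bm{Q}_{\bm{A}}, \bm{Q}_{\bm{\mu}})$, which follows directly from positive definiteness of the process noise. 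Invoking the standard DARE convergence result (e.g., Kailath et al., \textit{Linear Estimation}, Ch.~14), $\bm{P}^*$ exists, is unique and positive definite, and the Riccati iterates converge to it from any positive semidefinite initialization, in particular from $\bm{P}^{\pm}(0)=\bm{I}_{2N}$.

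The main obstacle will be that $\bm{H}(l-1)$ is itself random (through $\hat{\bm{A}}(l-1)$) and time-varying, whereas the classical DARE convergence theorem is stated for a deterministic time-invariant system matrix. I would bridge this by a perturbation/contractivity argument: write $\bm{H}(l-1) = \bm{F}_1^* + \bm{E}(l-1)$ with $\mathbb{E}\|\bm{E}(l-1)\|^2\to 0$ (from $|\beta|<1$ and $\mathrm{tr}(\bm{Q}_{\bm{A}})<\infty$), show that the Riccati map is a strict contraction on a neighbourhood of $\bm{P}^*$ (a standard consequence of detectability plus stabilizability), so that the perturbed iterates remain inside this neighbourhood and $\mathbb{E}[\bm{P}(l)]\to\bm{P}^*$ by dominated convergence. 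The measurement-delay residual introduced by the Larsen extrapolation is absorbed into an effective measurement-noise term whose covariance is bounded under $\mathrm{tr}(\bm{Q}_{\bm{\mu}})<\infty$, so it does not alter the limiting DARE or the uniqueness of $\bm{P}^*$.
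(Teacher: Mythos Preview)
Your proposal is correct and follows essentially the same route as the paper: recognize the recursion as a discrete Riccati equation for the augmented state, verify the detectability/stabilizability (equivalently, positive-definite process noise and spectral radius $<1$) hypotheses, and invoke the classical DARE convergence theorem from Kailath et al. The paper's justification is in fact embedded in the lemma statement itself and stops at that citation; your additional perturbation/contractivity step to handle the random time-varying $\bm{H}(l-1)$ and the bounded extrapolation residual goes beyond what the paper writes out, but is consistent with---and a natural tightening of---its argument.
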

Then, with the above ME-KF procedures, we can dynamically track $\bm{\mu}(l)$ with converged covariance and obtain an unbiased estimation of it as proved in \cite{chagas_extrapolation_2016}. As further illustrated in Fig.~\ref{fig:ekf}, beyond straightforwardly tracking $\bm{\mu}(l)$ with a priori autoregressive models \cite{vinogradova_estimating_2022}, our proposed ME-KF scheme introduces extrapolated a posteriori observation and achieves lower estimation error against $\bm{N}_{\bm{A}}$ and $\bm{N}_{\bm{\mu}}$.

\begin{figure}[!t]
\centering
\centering
\includegraphics[width=0.75\linewidth, trim=40 20 45 0, clip]{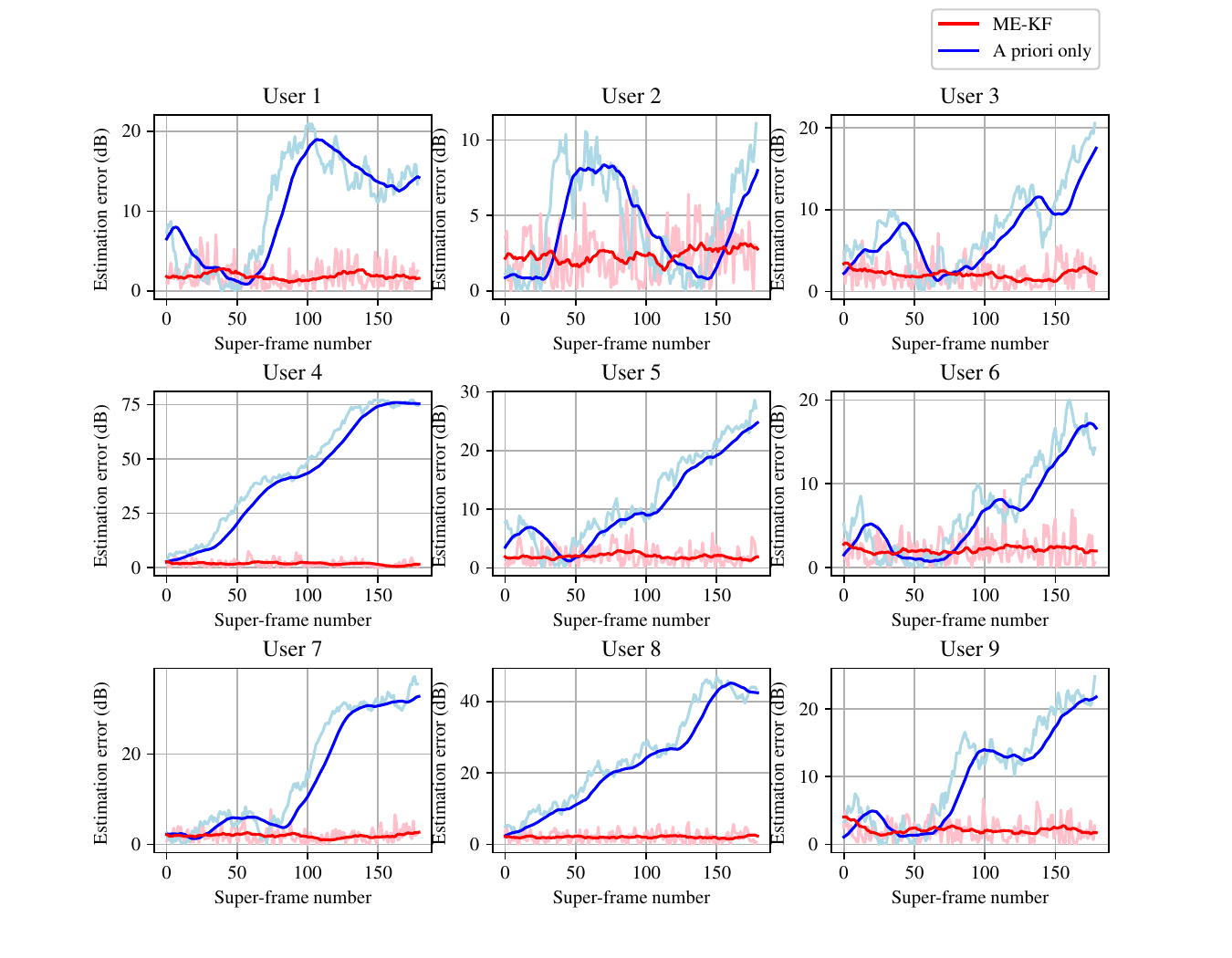}
\caption{{\color{black}Estimation error, e.g. $|\hat{\mu_n}(l) - \mu_n(l)|$ of a priori based \cite{vinogradova_estimating_2022} and proposed ME-KF scheme for each user.}}
\vspace{-5mm}
\label{fig:ekf}
\end{figure}

\subsection{Network Slicing Refinement with Predictive Results}
With the obtained $\hat{\bm{\mu}}(l)$, we focus on tailoring Algorithm~\ref{stationary_alg} into adaptation to non-stationary channel statistics. To be specific, to eliminate the non-stationary throughput variation over super-frames, we adopt the following refinements to Alg.~\ref{stationary_alg}, and name it after adaptive adversarial slicing-non-stationary refinement (Ad2S-NR) algorithm. 
\begin{itemize}
    \item First, at the start of each super-frame, i.e., ahead of step \ref{step start}, run ME-KF and obtain $\hat{\bm{\mu}}(l)$. 
    
    \item Second, we redefine the contextual vector $\bm{X}(l) = [1, G_1((l - 1)\times\Bar{K}), Q_1((l - 1)\times\Bar{K}), Q_1^2((l - 1)\times\Bar{K}), G_1((l - 1)\times\Bar{K})Q_1((l - 1)\times\Bar{K}), \hat{R}_1(l), \hat{R}_1^2(l),  Q_1((l - 1)\times\Bar{K}) \hat{R}_1(l),  G_1((l - 1)\times\Bar{K}) \hat{R}_1(l), \ldots, G_N((l - 1)\times\Bar{K}) \hat{R}_N(l)]$, where 
    \begin{small}
    \begin{align}
        \hat{R}_n(l) = \left\{\begin{array}{ll}
            2 \hat{\mu}_n(l) \cdot \log_2(e), & \hat{\mu}_n(l) \geq \frac{\ln\left( \frac{\tau |\mathcal{F}|}{P_{\mathrm{tot}}} \right)}{2}, \\
            e^{-\bm{P}_{N+n, N+n}(l)} \cdot e^{2\hat{\mu}_n(l)} & \hat{\mu}_n(l) < \frac{\ln\left( \frac{\tau |\mathcal{F}|}{P_{\mathrm{tot}}} \right)}{2}, 
        \end{array}\right.
    \label{throughput_approximate}
    \end{align} 
    \end{small}
    is the approximated unit spectral efficiency level related with $\hat{\mu}_n(l)$, $\tau$ denotes the threshold.
    \item Third, we resize the per-subcarrier loss as $\hat{\bm{\theta}}_{l, f} \in \mathbb{R}^{1 + 8N}, \forall f \in [1, \ldots, |\mathcal{F}|]$. 
\end{itemize}

\section{Performance Analysis}\label{performance analysis}
{\color{black}
    In this section, we analyze the asymptotical optimality and computational complexity properties with both Ad2S and Ad2S-NR. 
}
\begin{Thm}[Asymptotical Optimality for Ad2S and Ad2S-NR]
    By setting learning rate and exploration parameter as $\eta = L^{-\frac{2}{3}}(|\mathcal{F}|(1 + 4N))^{-\frac{1}{3}}(\log (|\mathcal{F}|))^{\frac{2}{3}}$ and $\gamma = L^{-\frac{1}{3}}(|\mathcal{F}|(1 + 4N)\log (|\mathcal{F}|))^{\frac{1}{3}}$, respectively, for Ad2S, the asymptotic optimality in terms of cumulative regret\footnote{The cumulative regret represents the gap between the historical slicing choices and the optimal slicing choice. As \( l \to \infty \), if the regret grows sublinearly, the system tends toward asymptotic optimality \cite{prabhu_sequential_2022}. } is bounded as follows, 
    \begin{align}
        \mathrm{Regret}{L}^{\textrm{Ad2S}} \leq 5L^{\frac{2}{3}}(|\mathcal{F}|(1 + 4N)\log|\mathcal{F}|)^{\frac{1}{3}} F_{\mathrm{max}}. \label{sta_regret}
    \end{align}
    where $F_{\mathrm{max}}$ is a constant that satisfies $F_{\mathrm{max}} \geq \frac{1}{\Bar{K}}\sum_{k = (l - 1) \times \Bar{K} + 1}^{l \times \Bar{K}}F^*_k(\mathcal{F}_l), \forall l$. 
    
    While under non-stationary cases which $\bm{\mu}(l)$ evolves as \eqref{pathloss_ar}, by setting learning rate and exploration parameter as $\eta = L^{-\frac{2}{3}}(|\mathcal{F}|(1 + 8N))^{-\frac{1}{3}}(\log |\mathcal{F}|)^{\frac{2}{3}}$ and $\gamma = L^{-\frac{1}{3}}(|\mathcal{F}|(1 + 8N)\log |\mathcal{F}|)^{\frac{1}{3}}$, respectively, for Ad2S-NR, the asymptotic optimality in terms of cumulative regret is bounded as follows, 
    \begin{align}\label{non_sta_regret}
        \mathrm{Regret}_{L}^{\textrm{Ad2S-NR}} \leq & 5L^{\frac{2}{3}}(|\mathcal{F}|(1 + 8N)\log|\mathcal{F}|)^{\frac{1}{3}} F_{\mathrm{max}}. 
    \end{align}
    where both $\mathrm{Regret}_{L}^{\textrm{Ad2S}}$ and $\mathrm{Regret}{L}^{\textrm{Ad2S-NR}}$ are denoted as 
    \begin{align}\label{regret}
        \mathrm{Regret}_{L} \triangleq & \max_{\pi \in \Pi}\mathbb{E}\Bigg[ \sum_{l = 1}^{L} (\frac{1}{\Bar{K}}\sum_{k = (l - 1) \times \Bar{K} + 1}^{l \times \Bar{K}}F^*_k(\mathcal{F}_l) \nonumber\\
        & - \frac{1}{\Bar{K}}\sum_{k = (l - 1) \times \Bar{K} + 1}^{l \times \Bar{K}}F^*_k(\bm{\pi}(\bm{X}_l)) \Bigg], 
    \end{align} 
    $\bm{\pi}(\bm{X}_l) = [\pi(1 \vert \bm{X}_l), \ldots, \pi(|\mathcal{F}| \vert \bm{X}_l)]$ is the strategy vector given $\bm{X}_l$ \cite{neu_efficient_2020}. 
    
\label{theorem_regret}
\end{Thm}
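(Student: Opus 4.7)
The plan is to cast Problem~\ref{hierarchical stochastic problem} as an adversarial linear contextual bandit over $|\mathcal{F}|$ arms (one per slicing ratio) with $d=1+4N$ dimensional contexts $\bm{X}_l$ for Ad2S and $d=1+8N$ for Ad2S-NR, and then invoke the exponentially weighted averaging analysis of \cite{neu_efficient_2020}. Concretely, at super-frame $l$ the learner observes $\bm{X}_l$, samples $\mathcal{F}_l$ from the mixed policy $\pi_l(\cdot|\bm{X}_l)$ (a Gibbs distribution over the cumulative linear scores, mixed with the $\gamma/|\mathcal{F}|$ uniform exploration term), and then receives the super-frame-averaged utility $U_l \triangleq \frac{1}{\Bar{K}}\sum_{k=(l-1)\Bar{K}+1}^{l\Bar{K}} F^*_k(\mathcal{F}_l)$, which is non-stochastic but bounded by $F_{\max}$. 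The regret in \eqref{regret} is then exactly the standard expected regret of this contextual bandit against the best policy $\pi\in\Pi$ in the linear class induced by $\bm{X}_l$.

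The first technical step is to verify that the importance-weighted estimator defined in step~\ref{step end} of Algorithm~\ref{stationary_alg} is conditionally unbiased for every arm, i.e., $\mathbb{E}\bigl[\langle \bm{X}_l,\hat{\bm\theta}_{l,f}\rangle \mid \bm{X}_l\bigr] = U_l(f)$, using the fact that $\langle \bm{X}_l,\bm{X}_l\rangle / (\bm{X}_l^T\bm{X}_l)=1$ and $\mathbb{E}[\mathbb{I}_{\{\mathcal{F}_l=f\}}\mid\bm{X}_l]=\pi_l(f|\bm{X}_l)$. Next I would apply the classical log-partition potential argument for exponential weights to bound the ``idealized'' regret against the linear score sequence by $\frac{\log|\mathcal{F}|}{\eta}+\eta\sum_l \mathbb{E}\!\left[\sum_{f}\pi_l(f|\bm{X}_l)\langle\bm{X}_l,\hat{\bm\theta}_{l,f}\rangle^2\right]$, and then bound the second-moment term by $F_{\max}^2 \,|\mathcal{F}|d/\gamma$ using the $\gamma$-mixing (which lower bounds $\pi_l(f|\bm{X}_l)\ge\gamma/|\mathcal{F}|$) together with the normalization by $\bm{X}_l^T\bm{X}_l$. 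The remaining term, i.e., the cost of uniform exploration, is at most $\gamma L F_{\max}$. Collecting yields a bound of the form $\frac{\log|\mathcal{F}|}{\eta}+\eta L|\mathcal{F}|d F_{\max}^2/\gamma+\gamma L F_{\max}$; optimizing over $\eta$ and $\gamma$ with the prescribed values produces the $L^{2/3}(|\mathcal{F}|d\log|\mathcal{F}|)^{1/3}F_{\max}$ rate, matching \eqref{sta_regret} with constant $5$ by absorbing lower-order factors.

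For Ad2S-NR, the argument is structurally identical after replacing $d=1+4N$ by $d=1+8N$ and using the augmented context that includes the approximated spectral efficiencies $\hat{R}_n(l)$. The extra ingredient is that, because the ME-KF estimator of $\bm{\mu}(l)$ is unbiased and its error covariance $\bm{P}^*$ is bounded (Lemma~\ref{ME-KF_convergence}), the additional coordinates of $\bm{X}_l$ remain bounded in second moment across $l$; thus the constants entering the variance bound are finite and independent of $l$, and the analysis survives intact. The two bounds \eqref{sta_regret}–\eqref{non_sta_regret} then follow from the same optimization over $\eta,\gamma$ with the updated dimension.

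The main obstacle will be the second-moment control of $\hat{\bm\theta}_{l,f}$. The estimator divides by both $\pi_l(f|\bm{X}_l)$ and $\bm{X}_l^T\bm{X}_l$, so a careful lower bound on $\|\bm{X}_l\|^2$ (which is at least $1$ by construction since the first coordinate is $1$) is needed to prevent blow-up; together with the $\gamma/|\mathcal{F}|$ mixing this yields the variance factor $|\mathcal{F}|d/\gamma$ used above. A secondary subtlety is justifying that the non-stochasticity of $U_l$, which arises from coupling with Algorithm~\ref{alg_psum_bcd} and the queue dynamics, does not break the bandit analysis; this is handled by treating $\{U_l(\cdot)\}_{l=1}^{L}$ as an arbitrary oblivious-in-context sequence once the contexts are revealed, which is exactly the setting of \cite{neu_efficient_2020}.
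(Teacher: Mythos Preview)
Your proposal handles the bandit mechanics correctly but skips the step that carries essentially all the weight in the paper's proof: verifying that the super-frame reward $U_l(f)=\frac{1}{\bar K}\sum_k F^*_k(f)$ is actually \emph{linear} in the context $\bm X_l$. The paper does not re-derive the potential/second-moment analysis at all; it simply invokes Theorem~1 of \cite{neu_efficient_2020}, whose hypotheses are (i) a linear reward model $U_l(f)=\langle \bm X_l,\bm\theta_{l,f}\rangle$ and (ii) unbiasedness of $\hat{\bm\theta}_{l,f}$. The entire content of Appendix~C is to check (i), by unrolling the definitions of $G_n(k)$, $C_n(k)$ and the queue recursion to show that $F^*_k(\mathcal F_l)$ depends on the initial super-frame state only through the monomials $1,\,G_n,\,Q_n,\,Q_n^2,\,G_nQ_n$ (and, for Ad2S-NR, additionally $\hat R_n,\hat R_n^2,Q_n\hat R_n,G_n\hat R_n$); this is precisely why $\bm X_l$ is built from those particular cross terms and why $d=1+4N$ (resp.\ $1+8N$). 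Your proposal never touches this, and your unbiasedness check $\mathbb E[\langle\bm X_l,\hat{\bm\theta}_{l,f}\rangle\mid\bm X_l]=U_l(f)$ is tautological (it holds for any bounded $U_l$) and is \emph{not} the unbiasedness of $\hat{\bm\theta}_{l,f}$ for $\bm\theta_{l,f}$ that the LinEXP3 analysis needs; the latter only makes sense once a true $\bm\theta_{l,f}$ exists, i.e., once linearity has been established.

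For Ad2S-NR the gap is sharper. The paper's use of ME-KF is not merely ``bounded second moments of the extra coordinates''; it is that the \emph{unbiasedness} of $\hat\mu_n(l)$ (together with the Gaussianity of the Kalman error and the high/low-SNR split in \eqref{throughput_approximate}) makes $r_n(k)$ an \emph{affine} function of $\mathbb E_l[\hat R_n(l)]$ (Appendix~D), which is what closes the linearity argument for the enlarged context. Your bounded-covariance remark would at best give approximate linearity with an additive bias, which does not plug into the LinEXP3 theorem as stated. Conversely, the detailed log-partition and variance calculations you outline are unnecessary here: once linearity and unbiasedness are in hand, the bounds \eqref{sta_regret}--\eqref{non_sta_regret} follow directly from \cite{neu_efficient_2020} with the stated $\eta,\gamma$.
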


\begin{proof}
    Please refer to Appendix~\ref{appendix:theorem_regret} for the proof. 
\end{proof}

Theorem \ref{theorem_regret} shows that Ad2S and Ad2S-NR achieve sub-linear regret (e.g. $o(L)$) performance under both stationary and non-stationary scenarios. Furthermore, approximately $1.3  \times$ additional regret is introduced due to the extended contextual dimensions under non-stationary cases, Fig.~\ref{fig:regret} can be referred as the numerical proof. 
{\color{black}
\begin{Lem}[Computational complexity for Ad2S and Ad2S-NR]
\label{lem_comp_complexity}
    The overall computational complexity for running both Ad2S and Ad2S-NR at the start of super-frame $l$ follows $\mathcal{O}(|\mathcal{F}| (1 + 4N)^2)$ and $\mathcal{O}\left(|\mathcal{F}| (1 + 8N)^2 + N^3\right)$, respectively. 
\end{Lem}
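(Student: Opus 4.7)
The plan is to audit the per-super-frame computational cost of Algorithm~\ref{stationary_alg} (and its non-stationary refinement) step by step, separating the contextual bandit bookkeeping from the ME-KF tracker, and then collecting the dominant terms into the two claimed bounds.

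For Ad2S, I would first note that step~\ref{step start} constructs the context $\bm{X}_l$ of dimension $d_1 = 1 + 4N$ in $O(d_1)$ time. In step~\ref{step policy generation}, the softmax exponent $\eta \sum_{l'=0}^{l-1}\langle \bm{X}_l, \hat{\bm{\theta}}_{l',f}\rangle$ must be evaluated for every candidate $f \in \mathcal{F}$. To avoid a linear-in-$l$ recomputation, I would maintain the running sum $\bm{S}_{l-1,f} \triangleq \sum_{l'=0}^{l-1}\hat{\bm{\theta}}_{l',f}$ across super-frames; because the indicator $\mathbb{I}_{\{\mathcal{F}_l = f\}}$ in step~\ref{step end} zeroes out the update for every unplayed arm, only the single sum indexed by $f = \mathcal{F}_l$ needs refreshing, at cost $O(d_1)$. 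Computing $\langle \bm{X}_l, \bm{S}_{l-1,f}\rangle$ and the softmax weight for all $f$ then costs $O(|\mathcal{F}| d_1)$, while the loss estimate in step~\ref{step end}, including the rank-one outer-product bookkeeping associated with $\bm{X}_l \bm{X}_l^{T}$ and the normalization $\bm{X}_l^{T} \bm{X}_l$, contributes at most $O(d_1^2)$ in the worst case. Summing and absorbing the lower-order terms yields the claimed $O(|\mathcal{F}|(1+4N)^2)$.

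For Ad2S-NR, I would rerun the same accounting with the enlarged contextual dimension $d_2 = 1 + 8N$ induced by the redefinition in Section~\ref{non-sta}, observing that the piecewise evaluation of $\hat{R}_n(l)$ in \eqref{throughput_approximate} adds only $O(N)$ of feature-construction overhead. The bandit block therefore contributes $O(|\mathcal{F}|(1+8N)^2)$ by the same argument as above. On top of this we must add the ME-KF cost described in Section~\ref{section ME-KF}: the prediction stage multiplies $2N \times 2N$ matrices at cost $O(N^3)$; the fusion stage inverts the $N \times N$ block $\bm{P}^{-}_{[N+1:2N; N+1:2N]}(l-1)$, also $O(N^3)$ by standard linear algebra; and the extrapolation and innovation updates are only $O(N^2)$. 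Summing the two contributions yields $O(|\mathcal{F}|(1+8N)^2 + N^3)$.

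The main subtlety is showing that the per-super-frame cost does not accumulate with $l$. This hinges on the incremental bookkeeping above: without the observation that only the played arm's parameter vector receives a non-trivial update each super-frame, a naive evaluation of the softmax policy would require summing $\hat{\bm{\theta}}_{l',f}$ over the full history and would blow the complexity up by an additional factor of $l$, in which case the lemma would fail to hold uniformly. Once this point is established, the remaining steps are routine linear algebra, and for the ME-KF block, the convergence of the Kalman recursion proved in Lemma~\ref{ME-KF_convergence} guarantees that the inverted block stays well-conditioned so that the $O(N^3)$ estimate is sharp.
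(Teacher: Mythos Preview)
Your proposal is correct and follows essentially the same step-by-step audit as the paper: bound the cost of the policy-generation and parameter-update steps of Algorithm~\ref{stationary_alg}, then add the $\mathcal{O}(N^3)$ ME-KF block for Ad2S-NR. The only notable difference is bookkeeping detail: the paper simply asserts that step~\ref{step end} costs $\mathcal{O}(|\mathcal{F}|(1+4N)^2)$ and lets it dominate, whereas you obtain the tighter $\mathcal{O}(|\mathcal{F}|d_1 + d_1^2)$ and then relax it to the claimed bound; you also make explicit the running-sum argument showing the per-super-frame cost does not grow with $l$, a point the paper leaves implicit. One small slip: the denominator in step~\ref{step end} is the scalar inner product $\bm{X}_l^{T}\bm{X}_l$, not the rank-one outer product $\bm{X}_l\bm{X}_l^{T}$, so that part of your accounting is actually cheaper than you state, but this only strengthens the conclusion.
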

\begin{proof}
    First, we analyze Ad2S's computational complexity. The policy generation step \ref{step policy generation} in Ad2S has the complexity of $\mathcal{O}(|\mathcal{F}| (1 + 4N))$. The parameter update step \ref{step end} in Ad2S has the complexity of $\mathcal{O}(|\mathcal{F}| (1 + 4N)^2)$. The overall computational complexity of Ad2S is dominated by step \ref{step end} and thus follows $\mathcal{O}(|\mathcal{F}| (1 + 4N)^2)$. 
    
    Then, we analyze Ad2S-NR's computational complexity. The computational complexity is divided into two parts for ME-KF and refined Ad2S, respectively. 
    \begin{enumerate}
        \item{\bf Complexity for ME-KF. } The prediction stage has the complexity of $\mathcal{O}(N^3)$ dominated by \eqref{prediction_stage_covariance}. The extrapolation and fusion stages also have the complexity of $\mathcal{O}(N^3)$ dominated by the update functions in the fusion stage. Then the overall computational complexity for ME-KF follows $\mathcal{O}(N^3)$. 
        \item{\bf Complexity for refined Ad2S. } Similar to Ad2S's computational complexity, refined Ad2S has the overall complexity of $\mathcal{O}(|\mathcal{F}| (1 + 8N)^2)$. 
    \end{enumerate}
    Then, the overall computational complexity of Ad2S-NR follows $\mathcal{O}\left(|\mathcal{F}| (1 + 8N)^2 + N^3\right)$. 
\end{proof}
Lemma~\ref{lem_comp_complexity} shows that both Ad2S and Ad2S-NR mainly encounter complexity bottlenecks of sub-channel number and user number. However, with coarse-grained slicing chunks \cite{zanzi_laco_2021} ($|\mathcal{F}| \ll N$), Ad2S-NR has the complexity of $\mathcal{O}(N^3)$ and its computational speed is primarily limited by the number of users. 
}

\section{Simulation Results}\label{simulation results}
In this section, we provide some numerical examples to demonstrate the advantages of our proposed schemes. Specifically, we evaluate the system performance in a multi-user downlink scenario, where the standard deviation of log-normal shadowing is set to be 5 dB \cite{noauthor_study_2024}. 
{\color{black}
In addition, to address dynamic and bursty traffic arrival patterns, we adopt the traffic model from \cite{yang_how_2021}, parameterized with inter-batch arrival $\lambda_a$, which is subsequently validated in our experimental evaluations.}
The rest detailed parameters are listed in Table~\ref{tab:simsetting} \cite{noauthor_system_2024}. 
{\color{black}
\begin{table}[t!]
\vspace{-3mm}
\caption{List of Simulation Parameters.}
\label{tab:simsetting}
\begin{center}
\begin{tabular}{c p{1.8cm}|c p{2.5cm}}
\toprule
\textbf{Parameters}	& \textbf{Value} & \textbf{Parameters}	& \textbf{Value}\\
\midrule
$N_{e}$ & $\{3, 4, \ldots, 7\}$ & $\lambda_e$ & 10000 packets/frame \\
\hline
$N_{U}$ & $\{2, 3, \ldots, 6\}$ & $\lambda_U$ & 38 packets/frame \\
\hline
$N_{M}$ & $\{1, 2, \ldots, 5\}$ & $\lambda_M$ & 12500 packets/frame \\
\hline
$d_e$ & 120 ms & $d_M$ & 30 ms \\
\hline
B & 360 kHz & $\eta$ & $1.25e{-4}$ packet$\cdot$s/bit\\
\hline
$\tau$ & 1 dB & $\Bar{K}$ & 100\\
\hline
$|\mathcal{F}|$ & $\{20, 24, \ldots, 36\}$  & $P_{tot}$  & $\{39, 40, \ldots, 43\}$ dBm\\
\hline
$\bm Q_{\bm{A}}$ & $1 \times 10^{-4}\bm{I}_N$ & $\bm Q_{\mu}$ & $\bm{I}_N$\\
\hline
$\omega_Q$ & $5\times10^{-8}$ & $\omega_T$ & $1\times10^{-3}$\\
\bottomrule
\end{tabular}
\end{center}
\vspace{-3mm}
\end{table} 
}

Some recently proposed schemes are referred as baselines, which are listed as follows. 
\begin{itemize}
    \item {{\bf Baseline 1: Non-adaptive slicing (NAdS) with dynamic resource allocation scheme (DRAS)\footnote{The puncturing mechanism is not considered in our system, therefore called ``DRAS''. } \cite{zhao_resource_2022}}, this scheme addresses Problem~\ref{prob frame} following a heuristic QoS oriented scheme, NAdS indicates that fixed slicing configuration with equal sub-channels provisioned for legacy and 6G slices, respectively. }
    \item {{\bf Baseline 2: NAdS with proposed RA scheme}, where NAdS for NS and PBRA for RA are performed. Then the following algorithms focus on benchmarking NS performance, and default using PBRA as RA policies. }
    \item {{\bf Baseline 3: Exponential-weight algorithm for exploration and exploitation (EXP3) \cite{auer_nonstochastic_2002} based network slicing}, this scheme employ classic EXP3 method to exploit the optimal arms (NS decisions) under adversarial scenarios. }
    \item {{\bf Baseline 4: Contextual upper confidence bound (UCB) \cite{li_contextual-bandit_2010, zanzi_laco_2021} based network slicing}, where the UCB is introduced to determine whether arms are fully explored, this scheme jointly considered the relationship between historic reward and context, and achieves promising sub-linear regret under sub-Gaussian distributed cases.}
    \item {{\bf Proposed:} Both Ad2S and Ad2S-NR for NS are performed under stationary and non-stationary scenarios, respectively. }
\end{itemize}

\begin{figure}[!t]
\vspace{-3mm}
\centering
\includegraphics[width=0.68\linewidth, trim=0 0 20 0, clip]{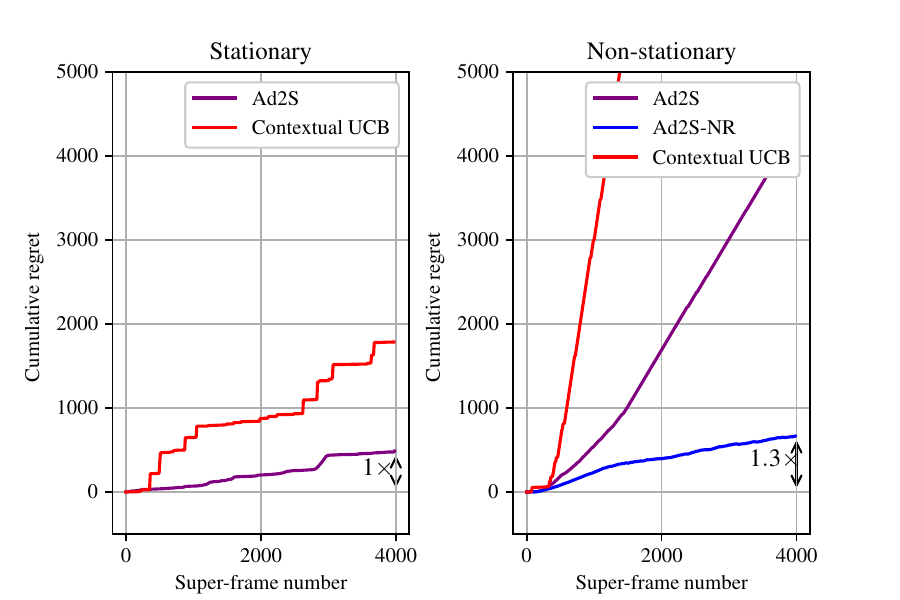}
\caption{Empirical cumulative regret. }
\vspace{-3mm}
\label{fig:regret}
\end{figure}

{\color{black}
\subsection{Non-stochastic and Non-stationary Robustness}
We first evaluate the performance of the proposed Ad2S and Ad2S-NR schemes in comparison with the contextual UCB method, using the empirical cumulative regret metric. 
As shown in Fig.~\ref{fig:regret}, the cumulative regret achieved by the proposed schemes is significantly improved. In particular, while both Ad2S and contextual UCB experience superlinear cumulative regret in non-stationary scenarios, the cumulative regret of Ad2S-NR remains sublinear. Moreover, the non-stationary regret overhead of Ad2S-NR is approximately $1.3 \times$, consistent with the bound established in Theorem~\ref{theorem_regret}. These findings highlight that although contextual UCB demonstrates strong performance in cases with sub-Gaussian reward, the non-stochastic effects arising from multi-timescale interactions present substantial challenges. On the other hand, even Ad2S with non-stochastic robustness, struggle to learn effectively under the challenges posed by non-stationary parameter forgetting. 
The incorporation of non-stochastic robustness and non-stationary refinement mechanisms in our proposed Ad2S-NR scheme effectively addresses these challenges, thereby facilitating sublinear convergence of cumulative regret.
}


\begin{figure*}[!t]
\centering
\includegraphics[width=5.2in, trim=0 20 0 0, clip]{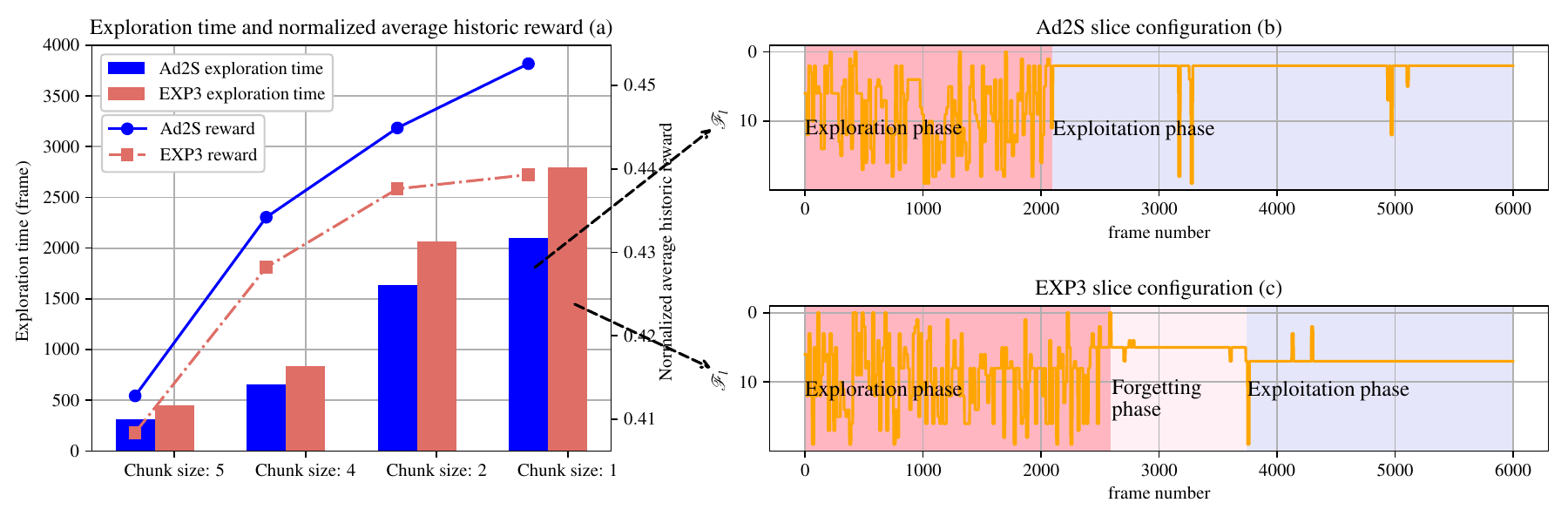}
\caption{{\color{black}(a) Impact of different chunk sizes on exploration time and average reward for Ad2S and EXP3 under identical parameters (e.g., $\eta = 1, \gamma = 0.5$). (b) Slice configuration behavior of Ad2S under chunk size of 1 sub-channel. (c) Slice configuration behavior of EXP3 under chunk size of 1 sub-channel. }}
\label{fig:exploration time}
\end{figure*}
\subsection{Adaptive QoS-aware Framework}
Second, we evaluate the convergence speed of Ad2S against EXP3. In this evaluation, we investigate the impact of chunk size on exploration durations, where the chunk size stands for the number of sub-channels grouped together when configuring slices. Evaluation is carried out with identical parameters $\eta, \gamma$ set for both Ad2S and EXP3. As shown in Fig.~\ref{fig:exploration time}(a), as the exploration time decreases monotonically with the chunk size, our proposed Ad2S experienced an average of $24.36\%$ less exploration time than EXP3. However, faster convergence comes at the cost of suboptimal reward. Consistently improved average reward for the Ad2S algorithm is demonstrated with numerical samples. 

By further investigating the slicing configuration behavior of our proposed Ad2S and EXP3 as illustrated in Fig.~\ref{fig:exploration time}(b) and Fig.~\ref{fig:exploration time}(c), we have that our proposed Ad2S scheme converges faster and respond to the temporal evolving backlog status of buffer and virtual queues in an agile manner. In contrast, EXP3 scheme takes $1.2 \times$ more number of iterations for the initial exploration, and additionally experiences the forgetting phase \cite{besbes_stochastic_2014} of more than 1000 frames before finally converging to an alter slicing configuration. These highlight that: rather than straightforwardly learning the reward, our proposed context-based scheme learns the parameter $\bm{\theta}$ in a more efficient QoS-aware manner. 

\begin{figure}[!t]
\centering
\includegraphics[width=0.65\linewidth, trim=20 0 20 0, clip]{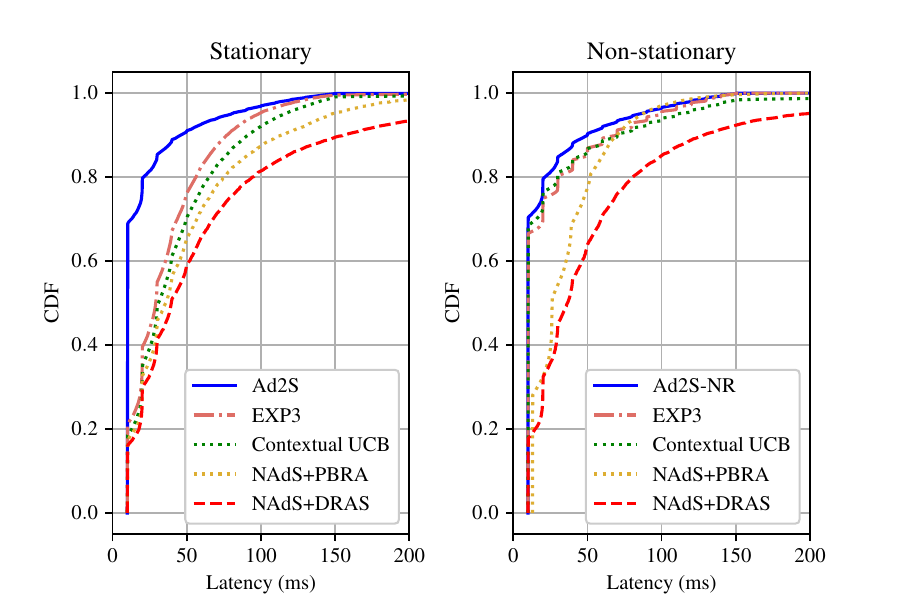}
\caption{Empirical Cumulative Distribution Function (CDF) of experienced latency. }
\label{fig:cdf}
\end{figure}
\subsection{Ad2S-NR Benchmarking}
Third, we evaluate the latency performance as shown in Fig.~\ref{fig:cdf}. Under stationary scenarios, our proposed schemes achieve improved latency with a mean delay equal to $\{20.72, 36.74, 43.59, 50.48, 70.44\}$(ms) for Ad2S, EXP3, Contextual UCB, NAdS+PBRA and NAdS+DRAS, respectively; while under non-stationary scenarios, a mean delay equal to $\{21.23, 26.04, 27.93, 27.59, 58.90\}$(ms) for Ad2S-NR, EXP3, Contextual UCB, NAdS+PBRA, and NAdS+DRAS, respectively. 
\begin{figure}[!t]
\centering
\includegraphics[width=0.7\linewidth, trim=20 0 0 0, clip]{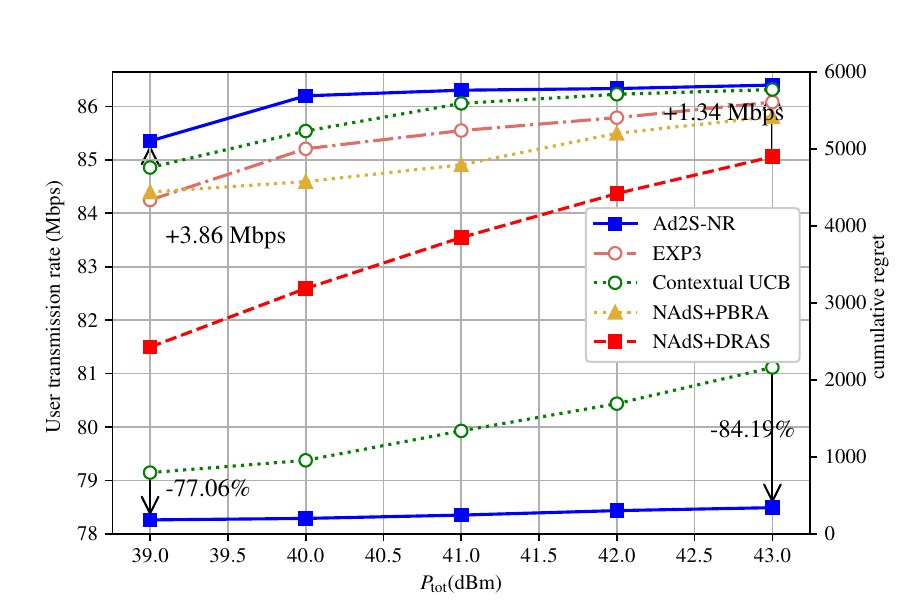}
\caption{{\color{black}Average transmission rate and cumulative regret versus total power budget. As shown in this figure, the transmission rate of our proposed schemes is superior to conventional schemes, as the total power budget increases. Cumulative regret of our proposed schemes is also consistently lower than the contextual bandits baseline. }}
\label{fig:rate_ptot}
\end{figure}

\begin{figure}[!t]
\centering
\includegraphics[width=0.7\linewidth, trim=20 0 20 0, clip]{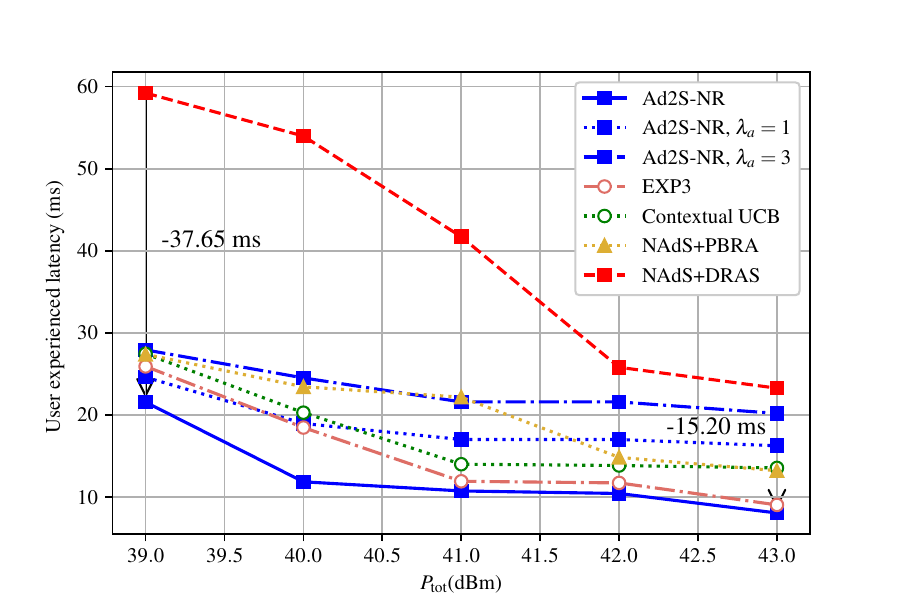}
\caption{{\color{black}Average latency versus total power budget. As shown in the figure, the average latency of our proposed scheme is smaller than conventional schemes, when the power budget increases. This figure also shows that increased level of traffic variation, in terms of $\lambda_a$ mentioned in \cite{yang_how_2021}, raises difficulties and introduces extra delay. }}
\label{fig:latency_ptot}
\end{figure}

\begin{figure}[!t]
\centering
\includegraphics[width=0.7\linewidth, trim=10 0 0 0, clip]{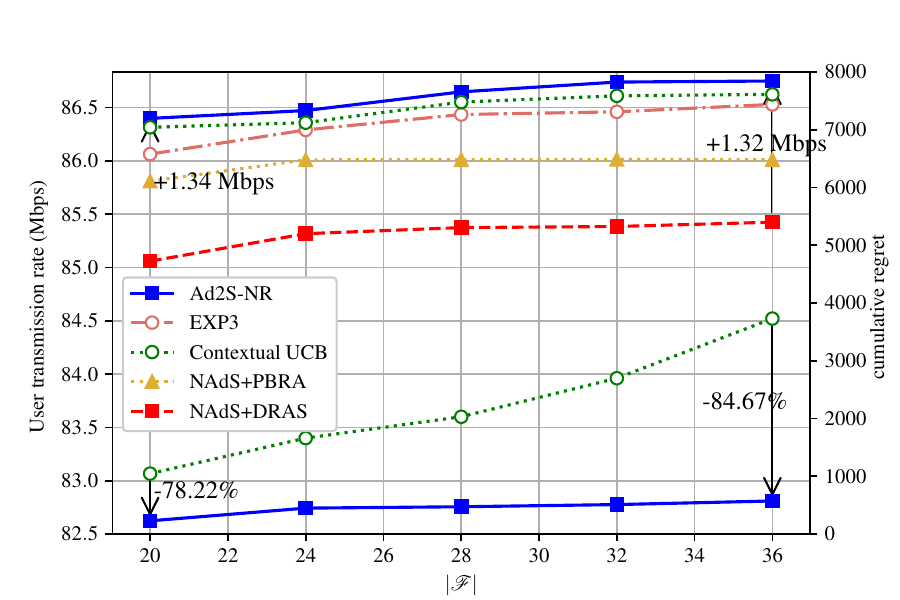}
\caption{{\color{black}Average transmission rate and cumulative regret versus number of sub-channels. As shown in this figure, our proposed scheme also shows regret and throughput improvement over conventional schemes, as the sub-channel number increases. As the action space scales up, our proposed schemes demonstrate consistent regret decrease. }}
\label{fig:rate_subchannel}
\end{figure}

\begin{figure}[!t]
\centering
\includegraphics[width=0.7\linewidth, trim=10 0 20 0, clip]{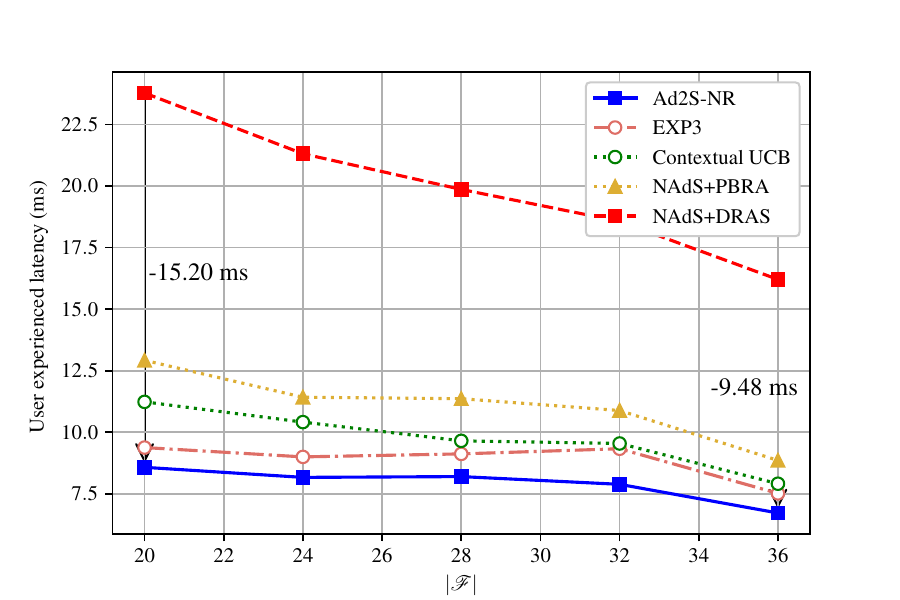}
\caption{{\color{black}Average latency versus number of sub-channels. As shown in this figure, the average latency of our proposed schemes is lower than baselines as well. Larger latency improvement is demonstrated under limited number of sub-channels, as our schemes dynamically utilize fragmented spectrum under scare sub-channel resource. }}
\label{fig:latency_subchannel}
\end{figure}

\begin{figure}[!t]
\centering
\includegraphics[width=0.7\linewidth, trim=20 0 0 0, clip]{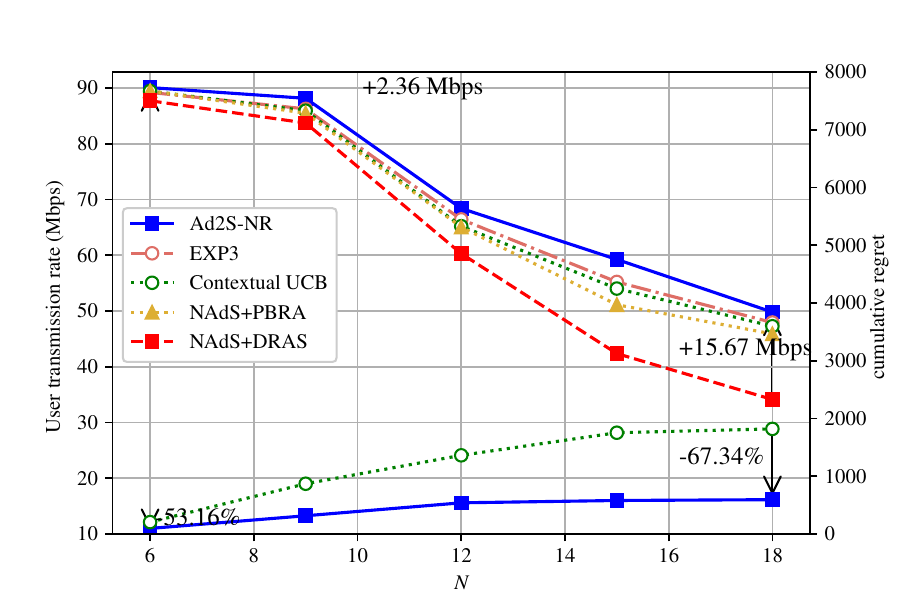}
\caption{{\color{black}Average transmission rate and cumulative regret versus number of users. As shown in this figure, the total transmission rate of our proposed scheme is still superior over conventional schemes, when the numbers of MBBLL, eMBB and URLLC users increase. And the regret is also smaller than the contextual bandits baseline. As the contextual vector scales up, decrease in regret is still robust with our proposed schemes. }}
\label{fig:rate_ue}
\end{figure}

\begin{figure}[!t]
\centering
\includegraphics[width=0.7\linewidth, trim=20 0 20 0, clip]{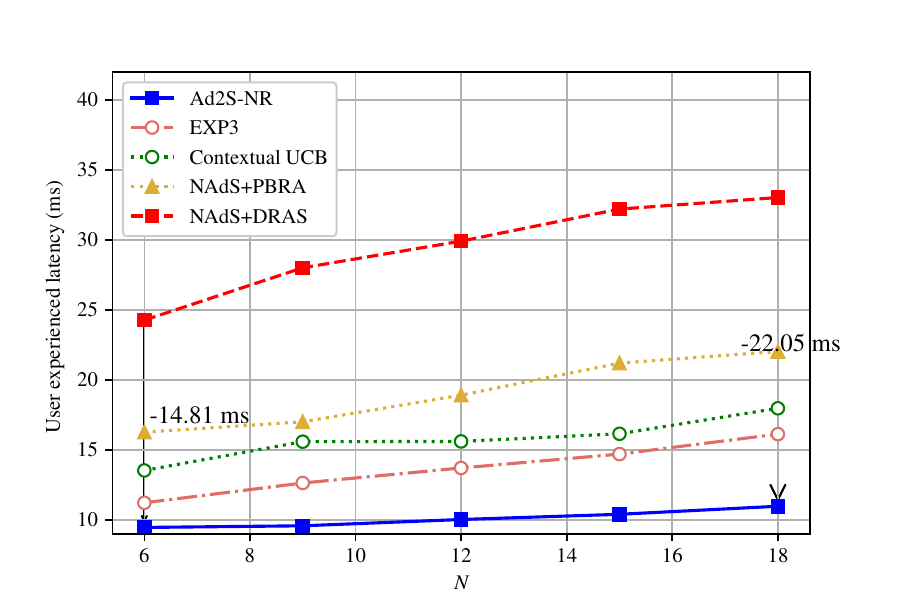}
\caption{{\color{black}Average latency versus number of users. As shown in this figure, the average user experienced latency of our proposed scheme is smaller as well, if compared with conventional schemes.}}
\label{fig:latency_ue}
\end{figure}

\begin{table}[h]
\vspace{8mm}
\setlength{\abovecaptionskip}{-0.2cm}   
\setlength{\belowcaptionskip}{-1cm}   
\caption{URLLC QoS satisfactory results.}
\begin{center}
\begin{tabular}{c c}
\toprule
\textbf{Baselines}	& \textbf{QoS satisfaction ratio}\\
\midrule
NAdS+DRAS & 80.3\% \\
NAdS+PBRA & 82.6\% \\
EXP3 & 89.0\% \\
Contextual UCB & 70.2\% \\
Ad2S-NR & 92.8\% \\
\bottomrule
\end{tabular}
\end{center}
\label{tab:urllc}
\end{table}
{\color{black}
Extended investigations on the impact of the power budget $P_{\mathrm{tot}}$, number of sub-channels $|\mathcal{F}|$ and number of users $N$ are carried out as shown in Fig.~\ref{fig:rate_ptot}-Fig.~\ref{fig:latency_ue}. 
We can observe that increasing the power and sub-channel budget leads to higher transmission rate and lower latency performance. 
Increasing number of users sacrifices transmission rate and latency performances and lead to increasing regrets. Strong traffic variation also leads to the delay increase. 
Notably, Ad2S-NR consistently achieves 53.16-84.67\% lower regret, 1.32-15.67 Mbps higher throughput, and 9.48-37.65 ms latency reduction compared to benchmarks across all test configurations. The traffic variation impact analysis in Fig.~\ref{fig:latency_ptot} further shows our scheme maintains $< 15\%$ performance degradation under $3 \times$ traffic fluctuation intensity, demonstrating environmental adaptability.
}
 Also, 92.8\% of URLLC QoS satisfaction ratio, e.g., $Pr(\eta r_n(k) \geq Q_n(k))$, is achieved with the implementation of Ad2S-NR, which outperforms other baselines as shown in Table~\ref{tab:urllc}. 

In conclusion, our proposed Ad2S-NR outperforms previous schemes with higher throughput, lower latency, and faster convergence speed. 

\subsection{Throughput Maximization versus Coupled QoS Guarantees}
After validating the numerical performance with benchmarks, we move on to investigate the trade-off between overall throughput and QoS guarantees introduced by the applied framework. 

First, we numerically verify the effectiveness of the proposed Lyapunov-based framework in stabilizing queue backlogs, thereby ensuring QoS satisfaction. Fig.~\ref{fig:E_queue_backlog} illustrates the average queue backlog for both eMBB and MBBLL users, computed as $\frac{1}{K}\sum_{k = 0}^{K - 1} Q_n(k)$ for $i \in \{e, M\}$. The results demonstrate that the backlog remains within the QoS thresholds defined by $\delta{e, M}$, which aligns with the theoretical guarantees established in Lemma~\ref{lem_dmu_long_term}. These results confirm that the Lyapunov-based optimization framework is capable of meeting QoS requirements for both legacy and next-generation applications.


{\color{black}
Second, with the 3D illustration of stability and rate trade-off as shown in Fig.~\ref{fig:latency vs total throughput}, we give guidelines for parameter tuning in practice as what follows. First, mark the typical telecom operator KPI values mentioned in the \cite{poulkov_future_2019} (60Mbps average user throughput and 5ms delay outage) on the graph and set them as the target. By plotting the contours of delay and throughput, we obtain the parameter selection range, bounded by the critical points of $(\omega_Q, \omega_T) = \{(5 \times 10^{-8}, 0.0009), (6 \times 10^{-8}, 0.001), (8 \times 10^{-8}, 0.001), (6 \times 10^{-8}, 0.0004)\}$, thus the ratio $\omega_Q/\omega_T$ is chosen within the range of $6 \times 10^{-5} - 8 \times 10^{-5}$ \cite{bracciale_lyapunov_2020}. Through fine-tuning within this range, telecom operators can further optimize for either latency or throughput demands while ensuring the aforementioned KPIs are met.}

{\color{black}
In addition, we further validated our algorithm's performance under bursty traffic patterns, which is still promising; also, we demonstrated the superiority of our schemes compared to DRL-based approaches. To maintain the paper's conciseness, these results have been included in Appendix~\ref{performance_bursty} and \ref{performance_ppo}. 
}
\begin{figure}[!t]
\centering
\includegraphics[width=0.74\linewidth, trim=20 10 20 20, clip]{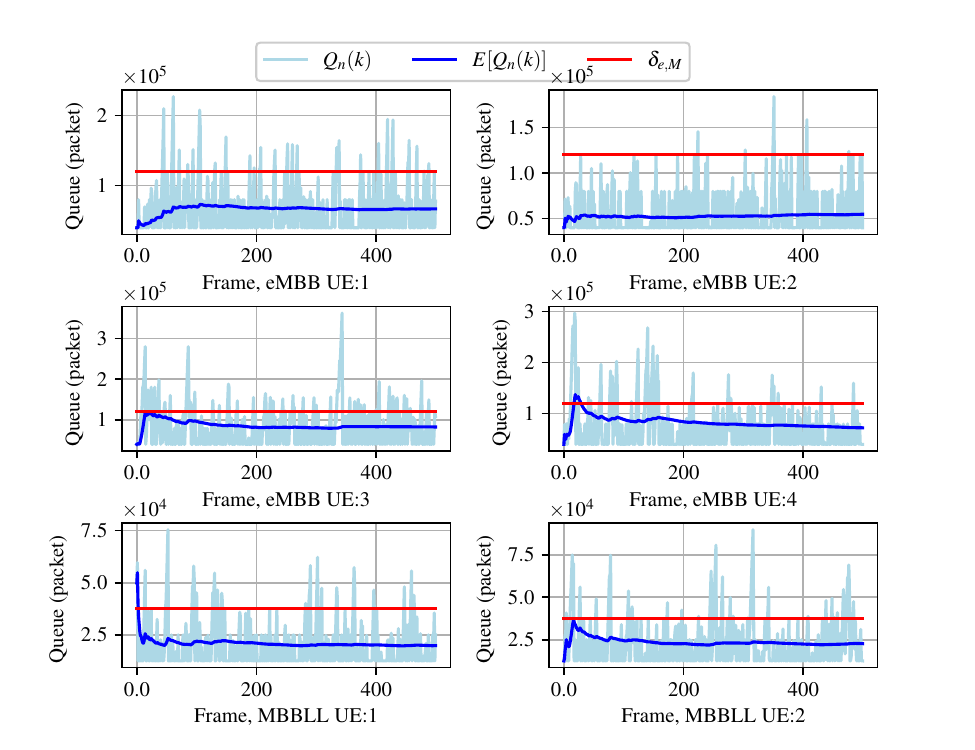}
\caption{{\color{black}Tendency of average queue backlog for eMBB and MBBLL UEs with proposed algorithm versus frame index. }}
\label{fig:E_queue_backlog}
\vspace{-3mm}
\end{figure}
\begin{figure}[!t]
\centering
\includegraphics[width=0.7\linewidth, trim=60 70 10 80, clip]{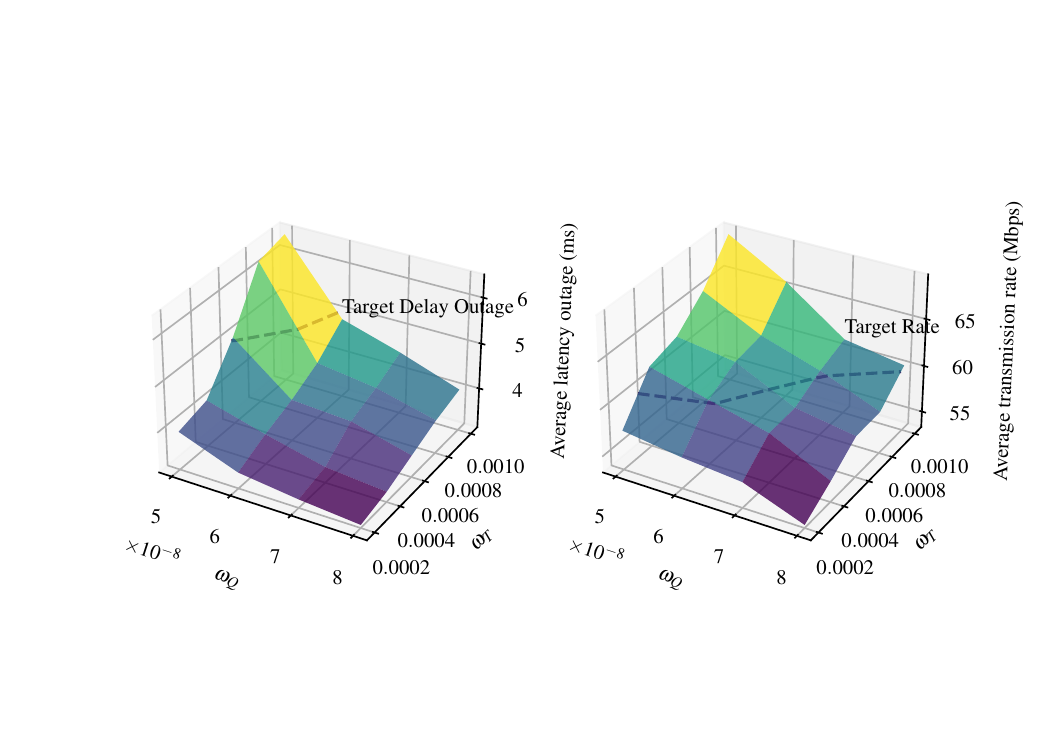}
\caption{{\color{black}A 3D illustration of average latency outage and average transmission rate versus varying $\omega_Q$ and $\omega_T$. As shown in this figure, the delay outage decrease as $\omega_Q$ increases. While as $\omega_T$ increases, higher average transmission rate are promised. }}
\label{fig:latency vs total throughput}
\end{figure}

\section{Conclusion}\label{conclusion}
In this paper, we proposed a unified QoS-aware framework for the next generation RAN slicing system. We aimed at maximizing long-term throughput with the MAC layer QoS constraints for hybrid legacy / 6G services under a dual scale NS and RA framework under the non-stationary channel environment. We leveraged Lyapunov optimization framework to equivalently transform the coupled backlog constrained problem into a instantaneous backlog-aware optimization problem. In stationary cases, PBRA scheme for the frame scale MINLP problem and Ad2S scheme for the super-frame long-term 
{\color{black}non-stochastic}
problem were proposed. The challenges brought by non-stationary channel variations were mitigated with the ME-KF refined scheme, e.g., Ad2S-NR. Sub-linear regret performance for both schemes were given in theoretical analysis. Numerical results showed that with our proposed QoS oriented framework and schemes, legacy and 6G applications coexisted harmonically, performance improvement in terms of QoS assurance and throughput was achieved with fewer exploration steps. 

{\color{black}
To extend the proposed unified QoS-aware framework in multi-cell scenarios, we can straightforwardly incorporate multi-agent intelligence, including multi-agent reinforcement learning \cite{zhou_multiagent_2024} or alternating direction method of multipliers \cite{yang_survey_2022}, which will be discussed in our future work.
}

\begin{appendices}
\section{Proof of Lemma~\ref{lem_dmu_long_term}}
\label{appendix:lem_1}
 As Lyapunov drift theorem is constrained under ergodic i.i.d. stochastic assumptions, we constrict \eqref{constraint average queue length, no URLLC} into super-frame based condition as given by
 
    \begin{small}
    \vspace{-6mm}
    \begin{align}\label{segment_cons}
        \frac{1}{\Bar{K}}\sum_{k = (l - 1) \times \Bar{K} + 1}^{l \times \Bar{K}} Q_n(k) \leq \left\{ 
        \begin{array}{l l}
          \delta_e, & n \in \mathcal{N}_e,\\
          \delta_M, & n \in \mathcal{N}_M, 
        \end{array} \right. 
        \forall l. 
    \end{align} 
    \vspace{-3mm}
    \end{small}
    
\noindent We already have that 
 $\Bar{K}$ is sufficiently large as presented in Section~\ref{system model}.  
 Therefore, we refer the segmented constraint \eqref{segment_cons} as a long-term constraint. By making sure that virtual queue $G_n(k)$ is stable and satisfies $\limsup_{K\to\infty}\frac{1}{K}\sum_{k = 0}^{K - 1}G_n(k) < \infty,\quad n\in \mathcal{N}_e \cup \mathcal{N}_M$, \eqref{segment_cons} can be equivalently addressed as specified in \cite{zhao_resource_2022, neely_mj_stochastic_2010}. To solve the problem of virtual queue stability, the Lyapunov-based DMU method is utilized.

The drift of Lyapunov function is

\begin{small}
\vspace{-3mm}
\begin{eqnarray}
\begin{aligned}
    &\mathbb{E}\left[L(\Gamma(k + 1)) - L(\Gamma(k))|\Gamma(k)\right] \nonumber\\
    = & \frac{\omega_Q}{2}\sum_{n\in\mathcal{N}_e \cup \mathcal{N}_M}({G_n(k + 1)}^2 - {G_n(k)}^2). \nonumber
\end{aligned}
\end{eqnarray}
\end{small}

To conserve bandwidth and power resources, $r_n(k)$ is guaranteed to be less than queue length $Q_n(k)$. Therefore, (\ref{dynamic queue model}) can be rewritten as: $Q_{n}(k+1) = Q_{n}(k) - r_n(k) + \Lambda_n(k)$. 

Based on the definition of virtual queue, we can derive

\begin{small}\vspace{-3mm}
    \begin{align}
        &{G_n(k + 1)}^2 - {G_n(k)}^2\nonumber\\
        &={\max\{G_n(k) + Q_n(k + 1) - \delta_{e,M}, 0\}}^2 - {G_n(k)}^2\nonumber\\
        &\leq[G_n(k) + Q_n(k + 1) - \delta_{e,M}]^2 - {G_n(k)}^2\nonumber\\
        &= {(Q_n(k + 1) - \delta_{e,M})}^2 + 2 G_n(k) (Q_n(k + 1) - \delta_{e,M})\nonumber\\
        &= {(Q_n(k) - \eta r_n(k) + \Lambda_n(k))}^2 + {\delta_{e,M}}^2 - 2 Q_n(k + 1) \delta_{e,M} \nonumber\\
        &+ 2 G_n(k) (Q_n(k + 1) - \delta_{e,M})\nonumber\\
        &\leq {(Q_n(k) + \Lambda_n(k))}^2 + {\delta_{e,M}}^2 + 2 G_n(k) Q_n(k) \nonumber\\
        &- 2 G_n(k) \eta r_n(k) + 2 G_n(k) \Lambda_n(k). 
    \label{drift_inequality}
    \end{align}
    \vspace{-3mm}
\end{small}

\noindent
Therefore the expectation of the Lyapunov drift with the status given at last frame can be derived as:

\begin{small}
   \begin{align}
        &\mathbb{E}\left\{{G_n(k + 1)}^2 - {G_n(k)}^2 \vert \Gamma(k)\right\} \leq C_n(k) - 2G_n(k) \eta r_n(k), \nonumber
    \end{align} 
\end{small}

\noindent $C_n(k) = {(Q_n(k) + \Lambda_n(k))}^2 + {\delta_{e,M}}^2 + 2 G_n(k) Q_n(k) + 2 G_n(k) \Lambda_n(k)$. Therefore, we have

\begin{small}
\vspace{-3mm}
    \begin{align}
        &\Delta(\Gamma(k)) - \omega_{T}\mathbb{E}\left[ \sum_{n\in\mathcal{N}_e \cup \mathcal{N}_M} r_n(k)|\Gamma(k) \right] \nonumber\\
            &\leq\mathbb{E}\left\{ \frac{\omega_Q}{2} \sum_{n\in\mathcal{N}_{e} \cup \mathcal{N}_M}\left[C_n(k) 
            - 2G_n(k) \eta r_n(k)\right] |\Gamma(k)\right\} \nonumber\\
            &- \omega_{T}\mathbb{E}\left[ \sum_{n\in\mathcal{N}_{e} \cup \mathcal{N}_U \cup \mathcal{N}_M} r_n(k)|\Gamma(k) \right]. \nonumber
    \end{align}
\vspace{-3mm}
\end{small}
\noindent

\section{Proof of Lemma~\ref{lem_bcd_converge}}
\label{appendix:lem_3_real}
Problem \ref{prob frame} can be equivalently addressed with relaxed $b_n(t, f)$ as described in Theorem~2 of \cite{zhang_network_2017} as follows

{\color{black}
\begin{small}
\vspace{-3mm}
\begin{align}
 g_{\sigma_{n^\mathrm{out}}} \triangleq \underset{\{b_n(t, f)\}, \atop \{p(t, f)\}}{\textrm{argmax}}\quad\quad& \sum_{n\in\mathcal{N}_e \cup \mathcal{N}_M} \omega_Q G_n(k) \eta r_n(k) \nonumber\\
 & + \sum_{n \in \mathcal{N}_e \cup \mathcal{N}_U \cup \mathcal{N}_M}\omega_T r_n(k)\nonumber\\
 & - \sigma^{n^\mathrm{out}} P_{\epsilon}^{n^\mathrm{out}}, \label{prob:psum}\\
 \textrm{s.t.}\qquad\quad & b_n(t,f) \in[0, 1], \quad \forall n, t, f \label{inner C1'} \tag{\ref{prob:psum}a}\\
 & \eqref{constraint sub-channel}, \eqref{constraint power}, \eqref{constraint URLLC rate}, \eqref{eqn:f_l_legacy}-\eqref{eqn:f_l}. \nonumber
\vspace{-6mm}
\end{align}
\end{small}
}

\noindent where the $p$-order (e.g., $p \in (0, 1)$) penalty term denoted as $P_{\epsilon} = \sum_{(t,f) \in \mathcal{T}_k \times \mathcal{F}} (\sum_{n \in \mathcal{N}} {(b_n(t, f) + \epsilon)}^p - c_{\epsilon, t, f})$, is added in search for binary solutions, $\epsilon > 0$ is any positive constant, $c_{\epsilon, t, f}$ follows $c_{\epsilon, t, f} = (1 + \epsilon)^p + (|\mathcal{N}| - 1) \epsilon^p$ and $\sigma > 0$ is the weighted coefficient for penalty. 

By substituting the non-concave $-P_\epsilon$ with its first order Taylor expansion lower bound $- P_{\epsilon}^{n^{\mathrm{out}}} - \nabla {P_{\epsilon}^{n^{\mathrm{out}}}}^T(\bm{b} - \bm{b}^{n^{\mathrm{out}}})$ at the $n^{\mathrm{out}}$ iteration and making $\sigma^{n^{\mathrm{out}}}$ increasing, where $\bm{b}$ is a matrix that contains all $b_n(t, f)$, we optimize the lower bound of Problem~\ref{prob frame} as  

{\color{black}
\begin{small}
\vspace{-3mm}
\begin{align}\label{convex_subchannel_allocation}
g_{n^\mathrm{out}} \triangleq \underset{\{b_n(t, f)\}, \atop \{p(t, f)\}}{\textrm{argmax}}\quad& \sum_{n\in\mathcal{N}_e \cup \mathcal{N}_M} \omega_Q G_n(k) \eta r_n(k) \nonumber\\
 & + \sum_{n \in \mathcal{N}_e \cup \mathcal{N}_U \cup \mathcal{N}_M}\omega_T r_n(k) \nonumber\\
 & - \sigma^{n^\mathrm{out}} \nabla {P_{\epsilon}^{n^{\mathrm{out}}}}^T \bm{b}, \\
 \textrm{s.t.}\qquad &\eqref{constraint sub-channel}, \eqref{constraint power}, \eqref{constraint URLLC rate}, \eqref{eqn:f_l_legacy}-\eqref{eqn:f_l}, \eqref{inner C1'}. \nonumber
\end{align} 
\end{small}
}
{\color{black}

\noindent As $g_{n^\mathrm{out}}$ is concave and constraints are convex sets for both $\{b_n(t, f)\}$ and $\{p(t, f)\}$ in \eqref{convex_subchannel_allocation}, by applying the classical block-coordinate descent (BCD) \cite{luo_delay-oriented_2017} methods to alternatively optimize $\{b_n(t, f)\}$ and $\{p(t, f)\}$, an monotonically increasing sequence of $g_{n^\mathrm{out}}(\{b_n(t, f)\}^{n^\mathrm{in}}, \{p(t, f)\}^{n^\mathrm{in}})$ is received. \cite{hong_unified_2016} prove that this block-based successive lower-bound maximization method promise the decreasing properties of original $g_{\sigma_{n^\mathrm{out}}}$, since the tight upper-bound surrogate function that majorizes the original function is minimized. Therefore, we have

\begin{small}
\vspace{-3mm}
\begin{align}
    & g_{\sigma_{n^\mathrm{out} + 1}}(\{b_n(t, f)\}^{n^\mathrm{out} + 1}, \{p(t, f)\}^{n^\mathrm{out} + 1}) \nonumber\\
    \geq & g_{\sigma_{n^\mathrm{out} + 1}}(\{b_n(t, f)\}^{n^\mathrm{out}}, \{p(t, f)\}^{n^\mathrm{out}}). \label{lower_bound_psum}
\end{align}
\end{small}

\noindent Once we denote $f_{n^\mathrm{out}} \triangleq \sum_{n\in\mathcal{N}_e \cup \mathcal{N}_M} \omega_Q G_n(k) \eta r_n(k) + \sum_{n \in \mathcal{N}_e \cup \mathcal{N}_U \cup \mathcal{N}_M}\omega_T r_n(k)$, we have

\begin{small}
\vspace{-3mm}
\begin{align}
    f_{n^\mathrm{out} + 1} -  f_{n^\mathrm{out}} \geq \sigma^{n^\mathrm{out} + 1} (P_{\epsilon}^{n^{\mathrm{out}} + 1} - P_{\epsilon}^{n^{\mathrm{out}}}). \label{inequal_psum}
\end{align}
\end{small}

\noindent Since $\sigma^{n^\mathrm{out} + 1} > 0$, $\{f_{n^\mathrm{out}}\}$ is a decreasing sequence as $\sigma^{n^{out}}$ increases, we have sequence $\{P_{\epsilon}^{n^{\mathrm{out}}}\}$ decreases. 

Suppose the positive sequence $\{\sigma^j\}$ monotonically increases and $\sigma^j \to + \infty$. Then we prove both the properties of critical point and feasibility are achieved with any limit point $\{b_n(t, f)^{j}, p(t, f)^{j}\}$ by contradiction as follows. 

\begin{itemize}
    \item {Feasibility: } assume the contrary so that $\{b_n(t, f)^{j}\}$ is not binary-valued. Then $P_{\epsilon}^{j} \neq 0$, as $\sigma^j \to + \infty$, $g_{\sigma^j} \to - \infty$, which contradicts the fact that $g_{\sigma^j}$ is lower-bounded by a limited value $g_\sigma^{j - 1} + (1 - \alpha)\sigma^{j - 1}P_{\epsilon}^{j - 1}$ as stated in \eqref{lower_bound_psum}. 
    \item {Critical point: } assume the contrary so that there exists an non-identical adjacent point $\{b_n(t, f)^{j + 1}, p(t, f)^{j + 1}\}$ other than $\{b_n(t, f)^{j}, p(t, f)^{j}\}$. Then $g_{\sigma^{j + 1}}(\{b_n(t, f)\}^{j + 1}, \{p(t, f)\}^{j + 1}) > g_{\sigma^{j + 1}}(\{b_n(t, f)\}^{j}, \{p(t, f)\}^{j})$, as $P_{\epsilon}^{j} = 0$, $f^{j + 1} - \sigma^{j + 1}P_{\epsilon}^{j + 1} > f^j$, which contradicts the fact that $\{f^j\}$ monotonically decreases. 
\end{itemize}

}

\section{Proof of Theorem~\ref{theorem_regret}}
\label{appendix:theorem_regret}

{\color{black}
Before we start the proof, let us recall Theorem 1 in \cite{neu_efficient_2020}. This theorem shows sublinear regret for LinEXP3 can be achieved under non-stochastic environments, with the hypotheses of unbiased $\theta$ estimation and linear reward model, since the reward variance is controlled via the exponential importance sampling. Then, we focus on proving that these two hypotheses stand with our proposed settings and organize the proof of Theorem~\ref{theorem_regret} as follows. 
\subsection{Proof under Stationary Cases}
\subsubsection{Linearity of Reward Model}
We prove the linear relationship between reward model $\frac{1}{\Bar{K}}\sum_{k = (l - 1) \times \Bar{K} + 1}^{l \times \Bar{K}}F^*_k(\mathcal{F}_l)$ and $\bm{X}_l$ under the closure of linear maps \cite{strang_introduction_2011}, and give the following proof. 
\begin{itemize}
    \item According to the definition of $G_n(k)$ and $C_n(k)$, both $\{G_n((l - 1) \times \Bar{K})\}$ and $\{C_n((l - 1) \times \Bar{K})\}$ are linear with $\bm{X}_l = [1, G_1((l - 1)\times\Bar{K}), Q_1((l - 1)\times\Bar{K}), Q_1^2((l - 1)\times\Bar{K}), G_1((l - 1)\times\Bar{K})Q_1((l - 1)\times\Bar{K}), \ldots, G_N((l - 1)\times\Bar{K})Q_N((l - 1)\times\Bar{K})]$. 
    \item According to the linear temporal evolution of backlog and virtual queue, when $k \in [(l - 1) \times \Bar{K} + 1, l \times \Bar{K}]$, both $\{G_n(k)\}$ and $\{C_n(k)\}$ are linear with $\{G_n((l - 1) \times \Bar{K} + 1)\}$ and $\{C_n((l - 1) \times \Bar{K} + 1)\}$. 
    \item According to the definition of $F^*_k(\mathcal{F}_l)$, for $k \in [(l - 1) \times \Bar{K} + 1, l \times \Bar{K}]$, $F^*_k(\mathcal{F}_l)$ is linear with both $\{G_n(k)\}$ and $\{C_n(k)\}$. 
\end{itemize}
By averaging $F^*_k(\mathcal{F}_l)$ through the hole super-frame $l$, the composition of up-mentioned linear functions is still linear, thus we have the linear reward model.

\subsubsection{Unbiased Estimation}
Then, we focus on proving that $\hat{\theta}_{l, f}$ is an unbiased estimator of $\theta_{l, f}$. 
}
By taking the expectation over $l$, we have

\[\scalebox{0.9}{
\begin{math}
\begin{small}
\vspace{-3mm}
\begin{aligned}
    &\mathbb{E}_l\left[ \hat{\theta}_{l, f} \right] = \mathbb{E}_l\left[ \frac{\mathbb{I}_{\{ \mathcal{F}_l = f \}}}{\pi_l(f|\bm{X}_l)} {\left({\bm{X}_l}^T {\bm{X}_l}\right)}^{-1} {\bm{X}_l} \frac{1}{\Bar{K}}\displaystyle\sum_{k = (l - 1) \times \Bar{K} + 1}^{l \times \Bar{K}}  F^*_k(\mathcal{F}_l) \right]\nonumber\\
    = &\mathbb{E}_l\left[ \mathbb{E}_l\left[\frac{\mathbb{I}_{\{ \mathcal{F}_l = f \}}}{\pi_l(f|\bm{X}_l)} \Bigg\vert \bm{X}(l)\right] \theta_{l, f} \right] = \theta_{l, f}, 
\end{aligned}
\end{small}
\end{math}
}\]

\noindent which shows that $\hat{\theta}_{l, f}$ is an unbiased estimator for $\theta_{l, f}$. 
Therefore, with the well-established reference of Theorem~1 in \cite{neu_efficient_2020}, the cumulative regret for Ad2S is bounded by \eqref{sta_regret}.  

{\color{black}
\subsection{Proof under Non-stationary Cases}
\subsubsection{Linearity of Reward Model}
We take the same path to prove the linear relationship between reward model $\frac{1}{\Bar{K}}\sum_{k = (l - 1) \times \Bar{K} + 1}^{l \times \Bar{K}}F^*_k(\mathcal{F}_l)$ and $\bm{X}_l$, and give the following proof. 
\begin{itemize}
    \item $\{r_n(k)\}$ is linear with $[1, \mathbb{E}_l[\hat{R}_1(l)], \ldots, \mathbb{E}_l[\hat{R}_N(l)]]$. Please refer to Appendix~\ref{appendix:supple} for the detailed proof. 
    \item According to the definition of $G_n(k)$ and $C_n(k)$, both $\{G_n((l - 1) \times \Bar{K})\}$ and $\{C_n((l - 1) \times \Bar{K})\}$ are linear with $[1, G_1((l - 1)\times\Bar{K}), Q_1((l - 1)\times\Bar{K}), Q_1^2((l - 1)\times\Bar{K}), G_1((l - 1)\times\Bar{K})Q_1((l - 1)\times\Bar{K}), r_1((l - 1)\times\Bar{K}), r_1^2((l - 1)\times\Bar{K}),  Q_1((l - 1)\times\Bar{K}) r_1((l - 1)\times\Bar{K}),  G_1((l - 1)\times\Bar{K}) r_1((l - 1)\times\Bar{K}), \ldots, G_N((l - 1)\times\Bar{K}) r_N((l - 1)\times\Bar{K})]$. 
    \item According to the linear temporal evolution of backlog and virtual queue, when $k \in [(l - 1) \times \Bar{K} + 1, l \times \Bar{K}]$, both $\{G_n(k)\}$ and $\{C_n(k)\}$ are linear with $\{G_n((l - 1) \times \Bar{K})\}$, $\{C_n((l - 1) \times \Bar{K})\}$ and $\{r_n(k)\}$. 
    \item According to the definition of $F^*_k(\mathcal{F}_l)$, for $k \in [(l - 1) \times \Bar{K} + 1, l \times \Bar{K}]$, $F^*_k(\mathcal{F}_l)$ is linear with $\{G_n(k)\}$, $\{C_n(k)\}$ and $\{r_n(k)\}$. 
\end{itemize}
By averaging $F^*_k(\mathcal{F}_l)$ through the hole super-frame $l$, the composition of up-mentioned linear functions is still linear, thus we have the linear reward model.

\subsubsection{Unbiased Estimation}
Then, we focus on proving that $\hat{\theta}_{l, f}$ is an unbiased estimator of $\theta_{l, f}$. 
}
By taking the expectation over $l$, we have

\[\scalebox{0.9}{
\begin{math}
\begin{small}
\vspace{-3mm}
\begin{aligned}
    &\mathbb{E}_l\left[\hat{\theta}_{l, f}\right] = \mathbb{E}_l\left[ \frac{\mathbb{I}_{\{ \mathcal{F}_l = f \}}}{\pi_l(f|\bm{X}_l)} {\left({\bm{X}_l}^T {\bm{X}_l}\right)}^{-1} {\bm{X}_l} \frac{1}{\Bar{K}}\displaystyle\sum_{k = (l - 1) \times \Bar{K} + 1}^{l \times \Bar{K}}  F^*_k(\mathcal{F}_l) \right]\nonumber\\
    = &\mathbb{E}_l\left[ \mathbb{E}_l\left[\frac{\mathbb{I}_{\{ \mathcal{F}_l = f \}}}{\pi_l(f|\bm{X}_l)} \Bigg\vert \bm{X}_l\right] \theta_{l, f} \right] = \theta_{l, f}. 
\end{aligned}  
\end{small}
\end{math}
}\]

\noindent Therefore, with the well-established reference of Theorem~1 in \cite{neu_efficient_2020}, we have the cumulative regret for Ad2S-NR bounded by \eqref{non_sta_regret}. 
{\color{black}
\section{Supplementary Proof for Appendix~\ref{appendix:theorem_regret}}
\label{appendix:supple}
With Lemma~\ref{ME-KF_convergence} and the unbiased properties of estimated $\hat{\mu}(l)$ proved in \cite{chagas_extrapolation_2016}, first we have

\[\scalebox{0.9}{
\begin{math}
\begin{small}
\vspace{-3mm}
\begin{aligned}
    & \mathbb{E}_l\left[ \hat{R}_n(l) \right] = \left\{\begin{array}{ll}
         \mathbb{E}_l\left[ 2 \hat{\mu}_n(l) \right] \cdot \log_2(e), &\hat{\mu}_n(l) \geq \frac{\ln\left( \frac{\tau |\mathcal{F}|}{P_{tot}} \right)}{2}, \\
         \mathbb{E}_l\left[ e^{-\bm{P}_{N+n, N+n}(l)} \cdot e^{2\hat{\mu}_n(l)} \right], &\hat{\mu}_n(l) < \frac{\ln\left( \frac{\tau |\mathcal{F}|}{P_{tot}} \right)}{2}, 
    \end{array}\right. \nonumber\\
    \overset{(a)}{=} & \left\{\begin{array}{ll}
         2 \mu_n(l) \cdot \log_2(e), &\hat{\mu}_n(l) \geq \frac{\ln\left( \frac{\tau |\mathcal{F}|}{P_{tot}} \right)}{2}, \\
         e^{-\bm{P}_{N+n, N+n}(l)} \cdot e^{2 \mu_n(l) + \bm{P}_{N+n, N+n}(l)}, &\hat{\mu}_n(l) < \frac{\ln\left( \frac{\tau |\mathcal{F}|}{P_{tot}} \right)}{2}, 
    \end{array}\right. \nonumber\\
    = & \left\{\begin{array}{ll}
         2 \mu_n(l) \cdot \log_2(e), &\hat{\mu}_n(l) \geq \frac{\ln\left( \frac{\tau |\mathcal{F}|}{P_{tot}} \right)}{2}, \\
         e^{2 \mu_n(l)}, &\hat{\mu}_n(l) < \frac{\ln\left( \frac{\tau |\mathcal{F}|}{P_{tot}} \right)}{2}, 
    \end{array}\right.
\end{aligned}   
\end{small}
\end{math}
}\]

\noindent $(a)$ is valid since the estimation results of Kalman filters are Gaussian distributed in linear Gaussian systems. Second, by comparing both $r_n(k), \forall k \in [(l - 1) \times \Bar{K} + 1,  l \times \Bar{K}]$ and $\hat{R}_n(l)$ under high/low SNR cases, we have

\begin{small}
\begin{align}
    & r_n(k) =  \sum_{(t, f) \in \mathcal{T}_k \times \mathcal{F}}B \log_2 \left(1 + b_n(t, f) |h_n(t,f)|^2 p(t, f) \right)\nonumber\\
    = & \sum_{(t, f) \in \mathcal{T}_k \times \mathcal{F}} B \left\{\begin{array}{ll}
         & \log_2\left(b_n(t, f) p(t, f) e^{2(\mu_n(l) + \mathcal{N}(0, \sigma_n^2))}\right), \\
         & \qquad \qquad \qquad \qquad \qquad \mu_n(l) \geq \frac{\ln\left( \frac{\tau |\mathcal{F}|}{P_{tot}} \right)}{2}, \\
         & b_n(t, f) p(t, f) e^{2\mathcal{N}(0, \sigma_n^2)} e^{2 \mu_n(l)}, \\
         & \qquad \qquad \qquad \qquad \qquad \mu_n(l) < \frac{\ln\left( \frac{\tau |\mathcal{F}|}{P_{tot}} \right)}{2}, 
    \end{array}
    \right.\nonumber\\
    = & \alpha_n(k) \mathbb{E}_l\left[ \hat{R}_n(l) \right] + \beta_n(k), 
\label{linear_representation}
\end{align}  
\end{small}

\noindent where
\[\scalebox{0.9}{
\begin{math}
\begin{small}
\vspace{-3mm}
\begin{aligned}
    &\alpha_n(k) = \sum_{(t, f) \in \mathcal{T}_k \times \mathcal{F}} \left\{\begin{array}{ll}
         B, &\mu_n(l) \geq \frac{\ln\left( \frac{\tau |\mathcal{F}|}{P_{tot}} \right)}{2}, \\
         B b_n(t, f) p(t, f)e^{2\mathcal{N}(0, \sigma_n^2)}, &\mu_n(l) < \frac{\ln\left( \frac{\tau |\mathcal{F}|}{P_{tot}} \right)}{2}, 
    \end{array}\right.
\end{aligned}   
\end{small}
\end{math}
}\]

\[\scalebox{0.9}{
\begin{math}
\begin{small}
\vspace{-3mm}
\begin{aligned}
    &\beta_n(k) = \sum_{(t, f) \in \mathcal{T}_k \times \mathcal{F}} \left\{\begin{array}{ll}
         & B \log_2\left(b_n(t, f) p(t, f)\right) + 2 \mathcal{N}(0, \sigma_n^2) \cdot \log_2(e), \\
         & \qquad \qquad \qquad \qquad \qquad \quad\mu_n(l) \geq \frac{\ln\left( \frac{\tau |\mathcal{F}|}{P_{tot}} \right)}{2}, \\
         &0, \\
         & \qquad \qquad \qquad \qquad \qquad \quad\mu_n(l) < \frac{\ln\left( \frac{\tau |\mathcal{F}|}{P_{tot}} \right)}{2}, 
    \end{array}\right.
\end{aligned}    
\end{small} 
\end{math}
}\]

\noindent are affine non-stochastic coefficients needed to be properly estimated according to our proposed scheme. To sum up, \eqref{linear_representation} indicates that any $r_n(k) \in \{r_n(k)|k \in [(l - 1) \times \Bar{K} + 1, l \times \Bar{K}]\}$ can be represented as a affine function of unbiased estimator $\hat{R}_n(l)$, which implies $\{r_n(k)\}$ is linear with $[1, \mathbb{E}_l[\hat{R}_1(l)], \ldots, \mathbb{E}_l[\hat{R}_N(l)]]$. 

\section{Performance of Our Proposed Schemes under Bursty and Varying Traffic Patterns}\label{performance_bursty}
We apply the popular bursty traffic models as specified in \cite{yang_how_2021, yang_multicast_2021} and \cite{chousainov_multiservice_2022}, and perform some extensive simulations with different inter-batch intervals $\lambda_{a, M}$ and average in-batch arrivals $\lambda_{b, M}$ as $\{1, 2, 3, 4, 5\}$ and $\{12500, 25000, 37500, 50000, 62500\}$, respectively. The corresponding simulation results are shown below, where our proposed Ad2S-NR algorithm is still promising as demonstrated in Table~\ref{tab:performance_slice_bursty} and Fig.~\ref{fig:slice_bursty}. 
\begin{table*}[t]
    \centering
    \caption{Average Slicing Configuration and MBBLL Experienced Latency under Different Arrival Profiles. As the inter-batch interval increases, the corresponding latencies for MBBLL users increase as well. However, latencies are still within 30ms threshold since our proposed Ad2S-NR dynamically allocated more spectrum resource to MBBLL slices. }
    \begin{tabular}{p{4.5cm}|c|c|c|c|c|c}
         \hline
         Inter-batch interval $\lambda_{a, M}$ & Non-bursty & 1 & 2 & 3 & 4 & 5 \\
         \hline
         \hline
         In-batch arrival $\lambda_{b, M}$ (packet/frame) & 12500 & 12500 & 25000 & 37500 & 50000 & 62500\\
         \hline
         Average slicing configuration $\Bar{|\mathcal{F}_l|}$ & 9.74 & 9.59 & 9.13 & 8.63 & 7.36 & 7.18\\
         \hline
         Average MBBLL experienced latency (ms) & 17.85 & 22.01 & 25.62 & 25.89 & 27.51 & 27.93\\
         \hline
    \end{tabular}
    \label{tab:performance_slice_bursty}
\end{table*}
\begin{figure}[H]
    \centering
    \includegraphics[width=\linewidth]{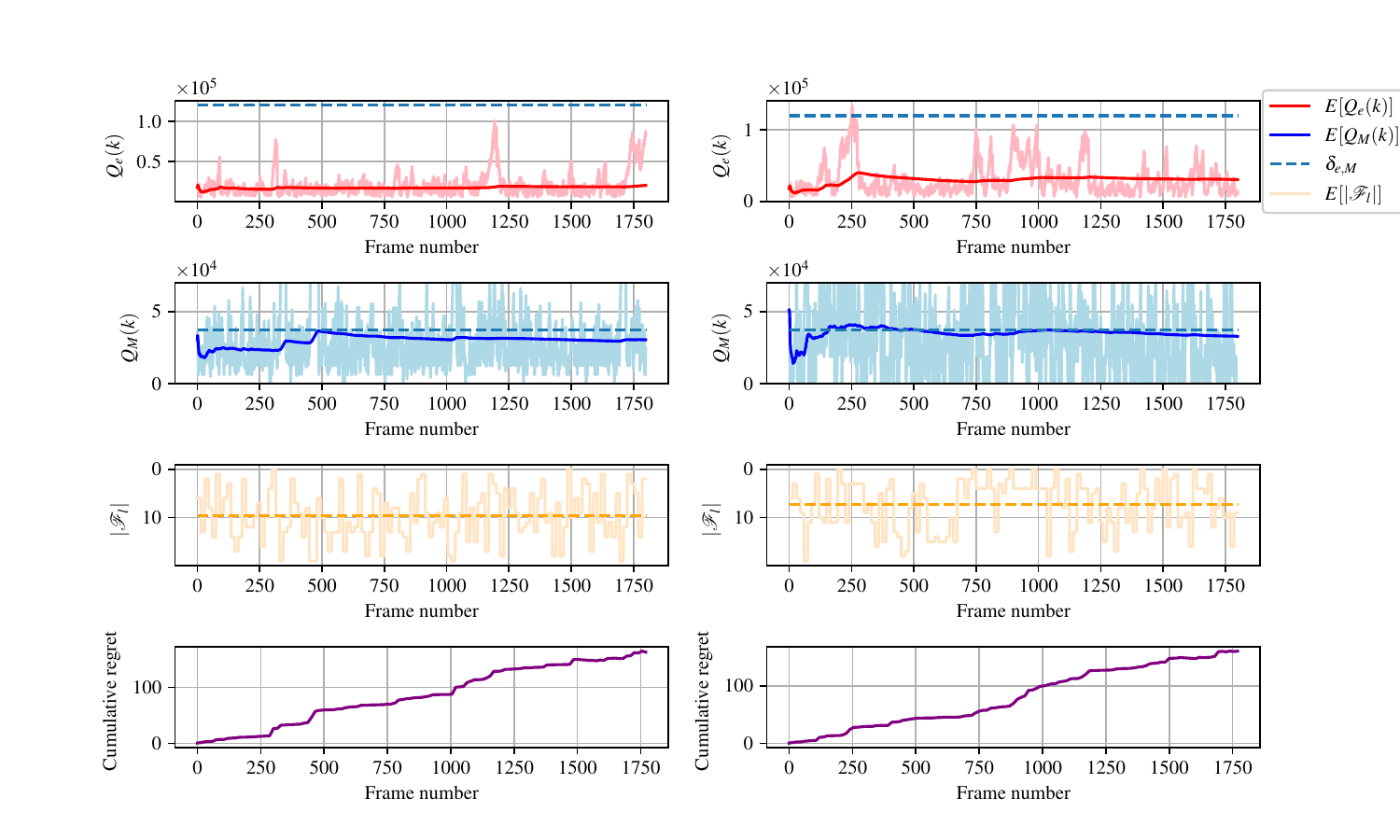}
    \caption{Slice configuration $\mathcal{F}_l$ and other performance metrics including queueing backlog and cumulative regret, where the left and right parts of this picture are carried out under $\lambda_{a, M} = 1$ and $\lambda_{a, M} = 5$, respectively. As $\lambda_{a, M}$ increases, the instantaneous queue length for MBBLL increases. To achieve QoS requirement, Ad2S-NR allocated more spectrum resource to MBBLL slices in average. }
    \label{fig:slice_bursty}
\end{figure}

We have performed extensive simulations to include the varying parameters of $\lambda_e, \lambda_U, \lambda_M$, and the corresponding numerical results are shown in Table~\ref{tab:performance_changing_lambda_varying} and Fig.~\ref{fig:latency_ptot}.
\begin{table*}[t]
    \centering
    \caption{eMBB, MBBLL Experienced Latency and URLLC satisfactory ratio under Different Varying Arrival Profiles. As the varying level in terms of $\lambda_a$ increases, average experienced latencies for both eMBB and MBBLL users increase, URLLC satisfactory ratio decreases. }
    \begin{tabular}{p{3cm}|c|c|c|c|c|c}
         \hline
         Inter-batch interval $\lambda_{a, e}, \lambda_{a, U}, \lambda_{a, M}$ & Non-bursty & 1 & 2 & 3 & 4 & 5 \\
         \hline
         \hline
         In-batch arrival $\lambda_{b, e}$ (packet/frame) & 10000 & 10000 & 20000 & 30000 & 40000 & 50000\\
         \hline
         In-batch arrival $\lambda_{b, U}$ (packet/frame) & 38 & 38 & 76 & 114 & 152 & 190\\
         \hline
         In-batch arrival $\lambda_{b, M}$ (packet/frame) & 12500 & 12500 & 25000 & 37500 & 50000 & 62500\\
         \hline
         Average eMBB experienced latency (ms) & 23.43 & 24.47 & 35.78 & 41.04 & 42.67 & 47.57\\
         \hline
         URLLC satisfactory ratio & 92.79\% & 91.55\% & 86.82\% & 84.68\% & 83.10\% & 82.12\%\\
         \hline
         Average MBBLL experienced latency (ms) & 17.85 & 22.14 & 25.87 & 27.28 & 28.50 & 29.76\\
         \hline
    \end{tabular}
    \label{tab:performance_changing_lambda_varying}
\end{table*}

\section{Performance Comparison with Reinforcement Learning-based Baseline}\label{performance_ppo}
We have conducted comparative experiments between our proposed schemes and the well-known Proximal Policy Optimization (PPO) algorithms. Experimental settings and performance comparison are shown in Table~\ref{tab:RL_setting}, \ref{tab:RL_setting_1} and Table~\ref{tab:RL_performance}, Fig.~\ref{fig:rate_vs_ptot_ppo}, \ref{fig:latency_vs_ptot_ppo}, respectively.  

\begin{table}[H]
    \centering
    \caption{PPO Experimental Hyperparameter Settings}
    \begin{tabular}{c|c}
        \toprule
        \textbf{Hyperparameter}	& \textbf{Value}\\
        \midrule
         Steps per episode & 100\\
         Discount ratio $\gamma$ & 0.99 \\
         Adam learning rate for policy optimizer & $3 \times 10^{-4}$\\
         Adam learning rate for value function optimizer & $1 \times 10^{-3}$\\
         GAE parameter $\lambda$ & 0.95\\
         Clipping ratio & 0.1\\
         \bottomrule
    \end{tabular}
    \label{tab:RL_setting}
\end{table}
\smallskip
* The source code of the adopted RL based scheme \cite{schulman_proximal_2017} can be found at \url{https://github.com/kashif/firedup}.

\begin{table}[H]
    \centering
    \caption{Experimental Hardware Settings}
    \begin{tabular}{c|c}
        \toprule
        \textbf{Parameter}	& \textbf{Value}\\
        \midrule
         CPU & Intel Core i9 \\
         Operation System & Ubuntu 22.04 \\
         Memory Capability & 64 GB \\
         Hard Disk Capability & 10 TB\\
         Number of CPUs Utilized & 1\\
         \bottomrule
    \end{tabular}
    \label{tab:RL_setting_1}
\end{table}

\begin{table*}[t]
    \centering
    \caption{Tests for Processing Delay. As shown in the table, the average processing delay of our proposed schemes is smaller than the PPO baseline, as the number of users scale up. }
    \begin{tabular}{p{4cm}|c|c|c|c|c}
        \hline
        Number of Users & 6 & 9 & 12 & 15 & 18\\
        \hline
        \hline
        Average processing delay for Ad2S-NR (s) & 0.002329 & 0.003096 & 0.004271 & 0.005176 & 0.005687\\
        \hline
        Average processing delay for PPO (s) & 0.081822 & 0.084986 & 0.084219 & 0.090774 & 0.099951\\
        \hline
    \end{tabular}
    \label{tab:RL_performance}
\end{table*}

\begin{figure}[H]
    \centering
    \includegraphics[width=\linewidth]{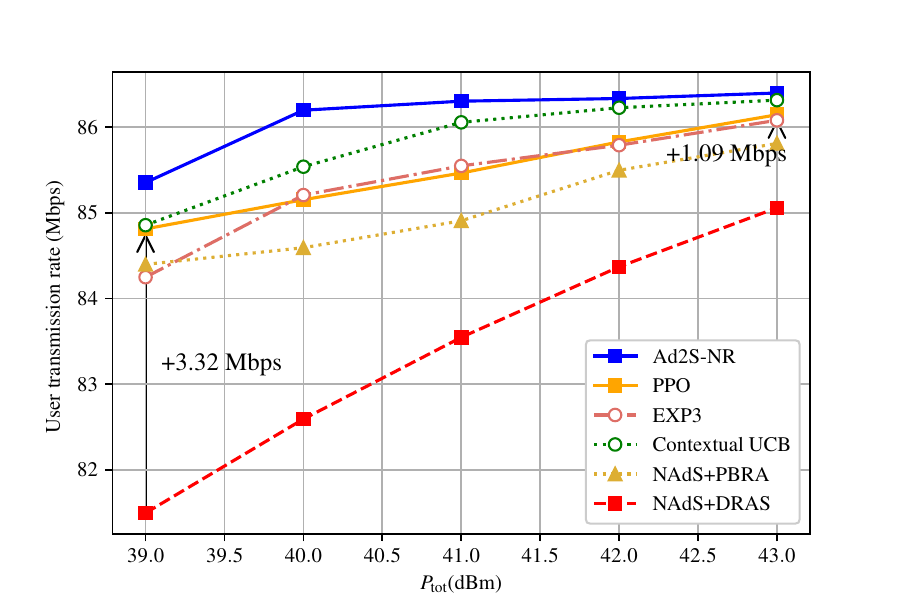}
    \caption{Average transmission rate versus total power budget. As shown in the figure, the average transmission rate of our proposed schemes is superior than PPO baseline, as the total power budget increases. }
    \label{fig:rate_vs_ptot_ppo}
\end{figure}

\begin{figure}[H]
    \centering
    \includegraphics[width=\linewidth]{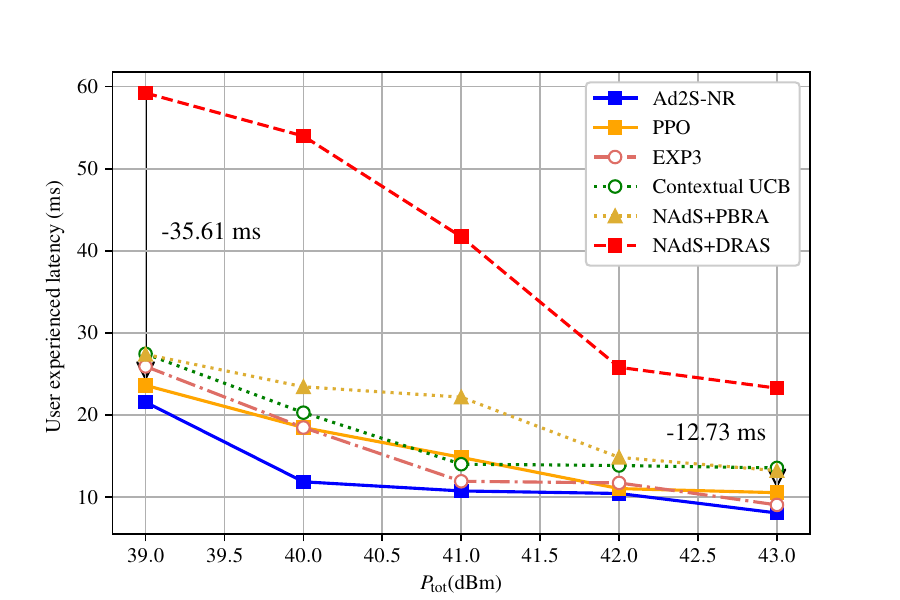}
    \caption{Average latency versus total power budget. As shown in the figure, the average latency of our proposed schemes is smaller as well, if compared to PPO baseline. }
    \label{fig:latency_vs_ptot_ppo}
\end{figure}

}
\end{appendices}

\bibliographystyle{IEEEtranc}
\bibliography{IEEEfull, iotj_acceptance_arxiv} 

\clearpage

\begin{minipage}{0.9\textwidth} 
\setlength{\parindent}{1.5em} 
\fontsize{11}{18}
\centerline{\bf Supplementary Materials}\vspace{0.2cm}

We demonstrate this supplementary results in response to the following main concerns: 

    \begin{itemize}
        \item Is the robustness of our proposed algorithms consistent under typical bursty 6G immersive applications? 
        \item Is the robustness of our proposed algorithms consistent under varying traffic patterns? 
        \item Are our algorithms competitive against deep reinforcement learning (DRL)-based baseline under non-stationary environments?  
    \end{itemize}

To address these concerns, within this material, experimental results for the performance of our proposed schemes under bursty traffic models, varying traffic and comparison to PPO baseline are included. These results are also included in Section VII-D, Appendix E and F of our main paper as demonstrated on the website: \url{https://arxiv.org/abs/2504.21444}. 
\end{minipage}
\newenvironment{singlecolumnenum}{
    \vspace{3mm}
    \begin{minipage}{0.9\textwidth}
    \begin{itemize}
}{%
    \end{itemize}%
    \end{minipage}
}
\begin{singlecolumnenum}
\vspace{3mm}
\item{\it Performance of our proposed schemes under bursty traffic models. }

We apply the popular bursty traffic models as specified in \cite{yang_how_2021, yang_multicast_2021} and \cite{chousainov_multiservice_2022}, and perform some extensive simulations with different inter-batch intervals $\lambda_{a, M}$ and average in-batch arrivals $\lambda_{b, M}$ as $\{1, 2, 3, 4, 5\}$ and $\{12500, 25000, 37500, 50000, 62500\}$, respectively. The corresponding simulation results are shown below, where our proposed Ad2S-NR algorithm is still promising. 
\begin{table}[H]
    \centering
    \begin{tabular}{p{4.5cm}|c|c|c|c|c|c}
         \hline
         Inter-batch interval $\lambda_{a, M}$ & Non-bursty & 1 & 2 & 3 & 4 & 5 \\
         \hline
         \hline
         In-batch arrival $\lambda_{b, M}$ (packet/frame) & 12500 & 12500 & 25000 & 37500 & 50000 & 62500\\
         \hline
         Average slicing configuration $\Bar{|\mathcal{F}_l|}$ & 9.74 & 9.59 & 9.13 & 8.63 & 7.36 & 7.18\\
         \hline
         Average MBBLL experienced latency (ms) & 17.85 & 22.01 & 25.62 & 25.89 & 27.51 & 27.93\\
         \hline
    \end{tabular}
    \caption{Average Slicing Configuration and MBBLL Experienced Latency under Different Arrival Profiles. As the inter-batch interval increases, the corresponding latencies for MBBLL users increase as well. However, latencies are still within 30ms threshold since our proposed Ad2S-NR dynamically allocated more spectrum resource to MBBLL slices. }
    \label{tab:performance_slice_bursty_supple}
\end{table}
\begin{figure}[H]
    \centering
    \includegraphics[width=0.8\linewidth]{fig_supplementary/slice_vs_queue_bursty.pdf}
    \caption{Slice configuration $\mathcal{F}_l$ and other performance metrics including queueing backlog and cumulative regret, where the left and right parts of this picture are carried out under $\lambda_{a, M} = 1$ and $\lambda_{a, M} = 5$, respectively. As $\lambda_{a, M}$ increases, the instantaneous queue length for MBBLL increases. To achieve QoS requirement, Ad2S-NR allocated more spectrum resource to MBBLL slices in average. }
    \label{fig:slice_bursty_supple}
\end{figure}
\end{singlecolumnenum}
\clearpage
\begin{singlecolumnenum}
\item{\it Performance of our proposed schemes under varying traffic.}

We have performed extensive simulations to include the varying parameters of $\lambda_e, \lambda_U, \lambda_M$, and the corresponding numerical results are shown below.

\begin{table}[H]
    \centering
    \begin{tabular}{|p{3cm}|c|c|c|c|c|c|}
         \hline
         Inter-batch interval $\lambda_{a, e}, \lambda_{a, U}, \lambda_{a, M}$ & Non-bursty & 1 & 2 & 3 & 4 & 5 \\
         \hline
         \hline
         In-batch arrival $\lambda_{b, e}$ (packet/frame) & 10000 & 10000 & 20000 & 30000 & 40000 & 50000\\
         \hline
         In-batch arrival $\lambda_{b, U}$ (packet/frame) & 38 & 38 & 76 & 114 & 152 & 190\\
         \hline
         In-batch arrival $\lambda_{b, M}$ (packet/frame) & 12500 & 12500 & 25000 & 37500 & 50000 & 62500\\
         \hline
         Average eMBB experienced latency (ms) & 23.43 & 24.47 & 35.78 & 41.04 & 42.67 & 47.57\\
         \hline
         URLLC satisfactory ratio & 92.79\% & 91.55\% & 86.82\% & 84.68\% & 83.10\% & 82.12\%\\
         \hline
         Average MBBLL experienced latency (ms) & 17.85 & 22.14 & 25.87 & 27.28 & 28.50 & 29.76\\
         \hline
    \end{tabular}
    \caption{eMBB, MBBLL Experienced Latency and URLLC satisfactory ratio under Different Varying Arrival Profiles. As the varying level in terms of $\lambda_a$ increases, average experienced latencies for both eMBB and MBBLL users increase, URLLC satisfactory ratio decreases. }
    \label{tab:performance_changing_lambda_supple}
\end{table}
\begin{figure}[H]
    \centering
    \includegraphics[width=0.68\linewidth]{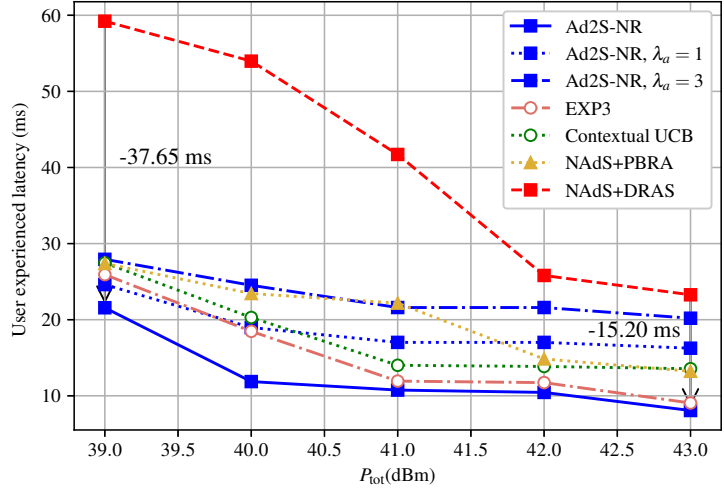}
    \caption{Average latency versus total power budget. As shown in the figure, the average latency of our proposed scheme is smaller than conventional schemes, when the power budget increases. This figure also shows that increased level of traffic variation raises difficulties and introduces extra delay. }
    \label{fig:latency_vs_ptot_bursty_supple}
\end{figure}
\end{singlecolumnenum}
\clearpage
\begin{singlecolumnenum}
\item{\it Performance comparison with PPO baseline.}

We have conducted comparative experiments between our proposed schemes and the well-known Proximal Policy Optimization (PPO) algorithms. Experimental settings and performance comparison are listed as follows. 

\begin{table}[H]
    \centering
    \caption{PPO Experimental Hyperparameter Settings}
    \begin{tabular}{c|c}
        \toprule
        \textbf{Hyperparameter}	& \textbf{Value}\\
        \midrule
         Steps per episode & 100\\
         Discount ratio $\gamma$ & 0.99 \\
         Adam learning rate for policy optimizer & $3 \times 10^{-4}$\\
         Adam learning rate for value function optimizer & $1 \times 10^{-3}$\\
         GAE parameter $\lambda$ & 0.95\\
         Clipping ratio & 0.1\\
         \bottomrule
    \end{tabular}
    \label{tab:RL_setting_supple}
\end{table}
\smallskip
* The source code of the adopted RL based scheme \cite{schulman_proximal_2017} can be found at \url{https://github.com/kashif/firedup}.

\begin{table}[H]
    \centering
    \caption{Experimental Hardware Settings}
    \begin{tabular}{c|c}
        \toprule
        \textbf{Parameter}	& \textbf{Value}\\
        \midrule
         CPU & Intel Core i9 \\
         Operation System & Ubuntu 22.04 \\
         Memory Capability & 64 GB \\
         Hard Disk Capability & 10 TB\\
         Number of CPUs Utilized & 1\\
         \bottomrule
    \end{tabular}
    \label{tab:RL_setting_supple}
\end{table}

\begin{table}[H]
    \centering
    \caption{Tests for Processing Delay. As shown in the table, the average processing delay of our proposed schemes is smaller than the PPO baseline, as the number of users scale up. }
    \begin{tabular}{|p{4cm}|c|c|c|c|c|}
        \hline
        Number of Users & 6 & 9 & 12 & 15 & 18\\
        \hline
        Average processing delay for Ad2S-NR (s) & 0.002329 & 0.003096 & 0.004271 & 0.005176 & 0.005687\\
        \hline
        Average processing delay for PPO (s) & 0.081822 & 0.084986 & 0.084219 & 0.090774 & 0.099951\\
        \hline
    \end{tabular}
    \label{tab:RL_setting_supple}
\end{table}

\begin{figure}[H]
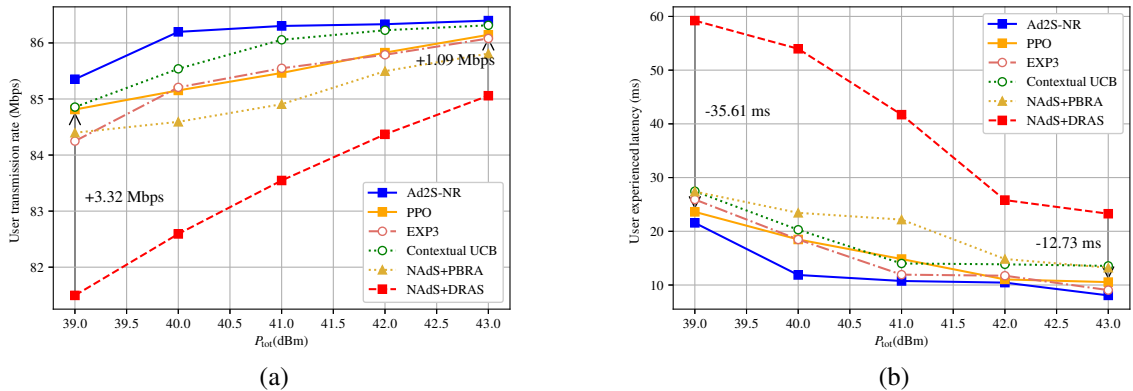

    \centering
    \subfloat[]{
        \includegraphics[width=0.48\textwidth]{fig_supplementary/rate_vs_ptot_arrow_jrnl_ppo.pdf}
        \label{fig:left}
    }
    \hfill
    \subfloat[]{
        \includegraphics[width=0.48\textwidth]{fig_supplementary/latency_vs_ptot_arrow_jrnl_ppo.pdf}
        \label{fig:right}
    }
    \caption{Average transmission rate and latency versus total power budget. As shown in the figure, the average transmission rate and latency of our proposed schemes are improved, if compared to PPO baseline. }
    \label{fig:overall}
\end{figure}

\end{singlecolumnenum}

\end{document}